\newtheorem{thm}{Theorem}[section]
\newtheorem{theorem}[thm]{Theorem}
\newtheorem{cor}[thm]{Corollary}
\newtheorem{lemma}[thm]{Lemma}
\newtheorem{prop}[thm]{Proposition}
\newtheorem{definition}[thm]{Definition}
\newtheorem{remark}[thm]{Remark}
\newtheorem{example}[thm]{Example}
\newtheorem{problem}[thm]{Problem}
\makeatletter \@addtoreset{equation}{section}
\newcommand{\bN} { {\mathbb{N}}}
\newcommand{\bC} { {\mathbb{C}}}
\newcommand{\bQ} { {\mathbb{Q}}}
\newcommand{\bZ} { {\mathbb{Z}}}
\title{Stability Problems in Symbolic Integration}
\author{Shaoshi Chen
  }
\date{\small
 KLMM, Academy of Mathematics and Systems Science,\\ Chinese Academy of Sciences, Beijing, 100190,
  China\\
  {\sf schen@amss.ac.cn}
}
\begin{document}
\maketitle

\begin{abstract}
This paper aims to initialize a dynamical aspect of symbolic integration by studying stability problems in differential fields. We present some basic properties of stable elementary functions and D-finite power series that enable us to characterize
three special families of stable elementary functions involving rational functions, logarithmic functions, and exponential functions. Some problems for future studies are proposed towards deeper dynamical studies in differential and difference algebra.
\end{abstract}

\section{Introduction}\label{SECT:intro}
In the proofs of the irrationality and transcendence of $e$ and $\pi$,
various kinds of definite integrals such as
\[\int_{-1}^1 (1-x^2)^n \cos(xz)\, dx \quad \text{and} \quad \int_0^{\pi} \frac{x^n(a-bx)^n}{n!} \sin(x)\, dx\]
are used~\cite{Niven1956, Natarajan2020}. In the process of deriving linear recurrences in $n$ for these integrals, one may realize that the integrands are so nice that their shape is stable under indefinite integration.
Some typical such kind of nice functions are polynomials, radicals and basic transcendental elementary functions.
One may be curious about whether these are the only possible functions that have this feature. This motivates our dynamical thinking in symbolic integration.

The problem of integration in finite terms or in closed form is one of the oldest problems in calculus. Since the initial work of Liouville from 1832 to 1841,  different systematical approaches have been developed to study this problem~\cite{Ostrowski1946, Ritt1948, Hardy1916, Khovanskii2014}.
A comprehensive historical study of Liouville's work on integration in finite terms is given in~\cite{LutzenBook} and also some friendly introductory notes are~\cite{Kasper1980, Elena1994}. After the birth of differential algebra, the problem is formulated in a pure algebraic fashion
and then solved by Risch~\cite{Risch1969, Risch1970} with further developments in symbolic integration~\cite{Slagle1961, Moses1968, Rothsetin1976,  Davenport1981, Trager1984, Bronstein1990, Gerhard2001, BronsteinBook, Raab2012}. The aim of symbolic integration is developing practical algorithms and softwares for solving the integration problem of elementary functions and other more general special functions from mathematical physics. The standard references on symbolic integration are firstly Bronstein's book~\cite{BronsteinBook} and some chapters in~\cite{MCA2003, ACA1992}. Recently, Raab gave an informative survey~\cite{Raab2013} on the Risch algorithm and its recent developments.
Another significant current trend is the arithmetic studies of elementary integration initialized by Masser an Zannier in~\cite{Zannier2014, Davenport2016, Masser2017, MZ2020}.

Self-maps on structured sets are ubiquitous in mathematics. The theory of dynamical systems is the mathematics of self-maps. So
the dynamical way of thinking has inspired many interdisciplinary areas in mathematics such as arithmetic dynamics in number theory~\cite{SilvermanBook}
and complex dynamics in analysis~\cite{Beardon1991}.
The fundamental objects in differential algebra are  differential fields and their extensions. The indefinite integration problem of elementary functions can be formulated in terms of differential fields.
In this paper, we view
differential fields as dynamical systems where derivations
play the role of self-maps.  Through the dynamical lens, we will see
some new landscape of symbolic integration and also understand better why the integrals from the beginning are so nice.

The remainder of this paper is organized as follows. We recall some basic terminologies in dynamical systems and differential algebra and then define
stability problems in Section~\ref{SECT:stable}. After bridging the connection, we explore the basic properties of stable elementary
functions and characterize three families of special stable
elementary functions in Section~\ref{SECT:elefun}. In Section~\ref{SECT:dfinite}, we study the stability problem on D-finite powers series and then conclude our paper in ~\ref{SECT:conc} by proposing
some problems for future research.

\section{Stability in differential fields}\label{SECT:stable}
We first bridge the connection between dynamical systems and differential algebra.  A (discrete) \emph{dynamical system} is a pair $(A, \phi)$ with $A$ being a set and $\phi: A\rightarrow A$ being a self-map on $A$. We recall the definition of four special subsets that are crucial for understanding a dynamical system.

\begin{definition} \label{DEF:stable}
Let $(A, \phi)$ be a dynamical system and $a\in A$. Then we say that

\begin{itemize}
\item[(1)] the element $a$ is a \emph{fixed} point of $\phi$ if $\phi(a)=a$.  The set of all fixed points is denoted by $\text{Fix}(\phi, A)$;
\item[(2)] the element $a$ is a \emph{periodic} point of $\phi$ if $\phi^n(a)=a$ for some positive $n\in \bN$. The set of all periodic points is denoted by $\text{Per}(\phi, A)$;
\item[(3)] the element $a$ is \emph{stable} in the system $(A, \phi)$ if there exists a sequence $\{a_i\}_{i\geq 0}$ in $A$ such that $a_0 = a$ and $\phi(a_{i+1}) = a_i$ for all $i\in \bN$. The set of all stable elements is denoted by $\text{Stab}(\phi, A)$;
\item[(4)] the element $a$ is \emph{attractive} in the system $(A, \phi)$ if for any $i\in \bN$, there exists
$a_i\in A$ such that $a = \phi^i(a_i)$. The set of all attractive elements is denoted by $\text{Attrac}(\phi, A)$ that is equal to $\bigcap_{i\in \bN}\phi^i(A)$.
\end{itemize}
\end{definition}

For any dynamical system $(A, \phi)$, we have the  inclusions:
\[\text{Fix}(\phi, A) \subseteq \text{Per}(\phi, A) \subseteq \text{Stab}(\phi, A) \subseteq \text{Attrac}(\phi, A).\]
It is not hard to see that the first three inclusions may be proper by definition. For the last one,  Godelle in~\cite{Godelle2010} presented a concrete example as below and also
discussed the question of deciding whether the inclusion $\text{Stab}(\phi, A) \subseteq \text{Attrac}(\phi, A)$ is indeed an equality in various settings.
\begin{example}[Godelle's Example]\label{EXAM:godelle}
Let $A = \{(i, j)\in \bZ^2\mid 0\leq j \leq \max(i-1, 0)\}$ and
let $\phi: A\rightarrow A$ be defined by $\phi(i, j) = (i, j-1)$ for positive $j$, and $\phi(i, 0) = (\min(i-1, 0), 0)$. Then the subset $\text{Stab}(\phi, A)$
is empty, but the subset $\text{Attrac}(\phi, A)$ is $\{(i, 0)\mid i\leq 0\}$.
\end{example}

For the convenience of later usage, we summarize the results on stable and attractive subsets
in~\cite[Sections 2-3]{Godelle2010} as follows.
\begin{theorem}[Godelle's Theorem]\label{THM:godelle}
Let $A$ be a set and $\phi: A\rightarrow A$ be a self-map on $A$. Then
\begin{itemize}
\item[$(i)$] The stable subset $\text{Stab}(\phi, A)$ is the largest one among all subsets of $A$ satisfying the property  $\phi(B) = B$ for $B \subseteq A$.
\item[$(ii)$] If $\phi$ is either injective on some $\phi^n$ or surjective, then
$\text{Stab}(\phi, A) = \text{Attrac}(\phi, A)$.
\item[$(iii)$] If $A$ is an infinite-dimensional vector space over a field $k$, and $\phi$ is a linear map such that $\dim_k(\text{Ker}(\phi))=1$. Then $\text{Stab}(\phi, A) = \text{Attrac}(\phi, A)$.
\end{itemize}
\end{theorem}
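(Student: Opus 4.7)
The plan for part~(i) is to verify directly that $\text{Stab}(\phi,A)$ itself satisfies the condition $\phi(B)=B$, and then to show it contains every subset with this property. For $\phi(\text{Stab})\subseteq\text{Stab}$ I would prepend a step: given $a\in\text{Stab}$ with witness chain $a_{0}=a,a_{1},a_{2},\dots$ satisfying $\phi(a_{i+1})=a_{i}$, the chain $\phi(a),a,a_{1},\dots$ certifies $\phi(a)\in\text{Stab}$. For $\text{Stab}\subseteq\phi(\text{Stab})$, the tail $a_{1},a_{2},\dots$ certifies $a_{1}\in\text{Stab}$ and $\phi(a_{1})=a$. For maximality, if $\phi(B)=B$ and $b\in B$, then iteratively using $B=\phi(B)$ lets me choose preimages $b_{i+1}\in B$ of $b_{i}$, yielding a backward chain in $A$ and placing $b$ in $\text{Stab}$.

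For part~(ii) the plan is to use part~(i) by showing $\phi(\text{Attrac})=\text{Attrac}$ under each hypothesis. The inclusion $\phi(\text{Attrac})\subseteq\text{Attrac}$ is automatic once one notes $\phi^{m+1}(A)\subseteq\phi^{m}(A)$. If $\phi$ is surjective then $\phi^{m}(A)=A$ for all $m$, so $\text{Attrac}=A$ and the reverse inclusion is trivial. If instead $\phi$ is injective on $\phi^{n}(A)$, I would pick, for each $m\geq 1$, an element $c_{m}$ with $\phi^{m}(c_{m})=a$ and set $b_{m}:=\phi^{m-1}(c_{m})\in\phi^{m-1}(A)$; then all $b_{m}$ with $m\geq n+1$ lie in $\phi^{n}(A)$ and satisfy $\phi(b_{m})=a$, so injectivity on $\phi^{n}(A)$ forces them to coincide. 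The common value $b$ then lies in $\phi^{m-1}(A)$ for arbitrarily large $m$, hence in $\text{Attrac}$, with $\phi(b)=a$.

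For part~(iii) I would again aim for $\phi(\text{Attrac})=\text{Attrac}$. Given $a\in\text{Attrac}$, the preimage set $S_{a}:=\phi^{-1}(a)$ is a nonempty coset of the one-dimensional subspace $\text{Ker}(\phi)$, and the argument of part~(ii) shows $S_{a}\cap\phi^{m}(A)$ is nonempty for every $m$. Writing this intersection in the form $b_{m}+K_{m}$, where $K_{m}:=\text{Ker}(\phi)\cap\phi^{m}(A)$ and $b_{m}$ is any element of it, the hypothesis $\dim_{k}\text{Ker}(\phi)=1$ forces each $K_{m}$ to be either $\{0\}$ or all of $\text{Ker}(\phi)$, and the chain $K_{m}$ is decreasing. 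Thus either $K_{m}=\text{Ker}(\phi)$ for every $m$, in which case $S_{a}\subseteq\bigcap_{m}\phi^{m}(A)=\text{Attrac}$, or there is a threshold $M$ beyond which $S_{a}\cap\phi^{m}(A)$ is a singleton; the resulting nested singletons must coincide and yield a unique $b\in\text{Attrac}$ with $\phi(b)=a$. The main obstacle I anticipate is securing this dichotomy cleanly: the one-dimensionality of the kernel is exactly what prevents $K_{m}$ from shrinking through intermediate sizes and is what lets me assemble a compatible preimage across all $m$; without it the nested sequence of affine fibres need not have nonempty intersection.
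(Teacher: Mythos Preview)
Your proposal is correct in all three parts. However, there is nothing to compare it against: the paper does not supply a proof of this theorem. The result is stated there as a summary of Godelle's work, with the sentence ``we summarize the results on stable and attractive subsets in~[Sections 2--3]{Godelle2010} as follows'' and no further argument. So your write-up stands on its own rather than as a match or deviation from the paper.

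A couple of minor remarks on the content. In part~(iii) your dichotomy argument is clean and does not in fact use the infinite-dimensionality hypothesis; that hypothesis is presumably there in Godelle's formulation to exclude the trivial situation where $A$ is finite-dimensional and the result already follows from~(ii) (since the decreasing chain $\phi^{m}(A)$ stabilizes and $\phi$ becomes a surjective, hence injective, endomorphism of the limit). In part~(ii), when you say ``$\phi(\text{Attrac})\subseteq\text{Attrac}$ is automatic,'' you might spell out the one-line verification $\phi\bigl(\bigcap_{m}\phi^{m}(A)\bigr)\subseteq\bigcap_{m}\phi^{m+1}(A)=\text{Attrac}$, but this is cosmetic.
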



Throughout this paper, we assume that all fields are of characteristic zero. For any field $K$, we call an additive map $\delta: K\rightarrow K$
 a \emph{derivation} on $K$ if $\delta(fg) = g\delta(f) + f \delta(g)$
for all $f, g\in K$. By induction, we have the general Leibniz formula
\begin{equation}\label{EQ:leibniz}
\delta^n(fg) = \sum_{i=0}^n \binom{n}{i}\delta^i(f)\delta^{n-i}(g)
\end{equation}
for all $n\in \bN$ and $f, g\in K$.  We call the pair $(K, \delta)$ a \emph{differential field}.
An element $c\in K$ is called a \emph{constant} in $K$ if $\delta(c)=0$. The set of all constants in $K$ forms a subfield of $K$, denoted by $C_K$. Note that
$\delta$ is a $C_K$-linear map on $K$. An element $f\in K$
is said to be \emph{integrable} in $K$ if there exists $g\in K$ such that $f = \delta(g)$. By the $C_K$-linearity of the derivation $\delta$, we see that the set $\delta(K)$ of all integrable elements in $K$ forms a linear subspace of $K$ over $C_K$.  A differential field $(K_1, \delta_1)$ is called a \emph{differential extension} of the differential field $(K, \delta)$ if $K\subseteq K_1$
and $\delta_1{\mid}_{K} = \delta$. By abuse of notation, we use the same symbol for derivations on differential extensions.

The ring of linear differential operators over the differential field $(K, \delta)$ is denoted by $K\langle D \rangle$, in which we have the commutation rule
\[D\cdot f = f\cdot D + \delta(f)\quad \text{for all $f\in K$}. \]
One of basic properties of the ring $K\langle D \rangle$ is that it is a left Euclidean domain, in which we can define and effectively compute the greatest common right divisor (GCRD) and least common left multiple (LCLM) of polynomials if $K$ is a computable field. For more results on this ring, one can see~\cite{Ore1933, BronsteinPetkovsek1996, AbramovLeLi2005}.
The general Leibniz rule~\eqref{EQ:leibniz} now is translated into the
general commutation rule in $K\langle D \rangle$ as follows
\begin{equation}\label{EQ:drule1}
D^n\cdot f  = \sum_{i=0}^n \binom{n}{i}\delta^i(f)\cdot D^{n-i}
\end{equation}
for all $n\in \bN$ and $f\in K$. Similarly, we have the following formula
\begin{equation}\label{EQ:drule2}
f \cdot D^n  = \sum_{i=0}^n (-1)^i \binom{n}{i} D^{n-i} \cdot \delta^i(f)
\end{equation}
for all $n\in \bN$ and $f\in K$. The differential field $K$ can be viewed as
a $K\langle D \rangle$-module via the action
\[L(f) = \sum_{i=0}^d \ell_i \delta^i(f)  \]
for any $L = \sum_{i=0}^d \ell_i D^i\in K\langle D \rangle$ and $f\in K$.

We now view the differential field $(K, \delta)$
as a dynamical system and consider its stable and attractive subsets.

\begin{prop}\label{PROP:stable}
Let $(K, \delta)$ be a differential field. Then
\[\text{Stab}(\delta, K) = \text{Attrac}(\delta, K).\]
\end{prop}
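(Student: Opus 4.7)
The plan is to apply Godelle's Theorem~\ref{THM:godelle}(iii) by viewing $K$ as a vector space over its field of constants $C_K$ and $\delta$ as a $C_K$-linear self-map. The inclusion $\text{Stab}(\delta,K)\subseteq \text{Attrac}(\delta,K)$ is immediate from Definition~\ref{DEF:stable} (any witnessing sequence $\{a_i\}$ for stability also witnesses attractiveness via $a=\delta^i(a_i)$), so only the reverse inclusion requires argument. To invoke Theorem~\ref{THM:godelle}(iii) I must verify three hypotheses: (a) $\delta$ is $C_K$-linear, (b) $\dim_{C_K}\text{Ker}(\delta)=1$, and (c) $K$ is infinite-dimensional over $C_K$.

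Conditions (a) and (b) are essentially formal. $C_K$-linearity of $\delta$ is built into the definition, since for $c\in C_K$ the Leibniz rule gives $\delta(cf)=c\delta(f)+f\delta(c)=c\delta(f)$. And $\text{Ker}(\delta)=C_K$ by definition, which is $1$-dimensional over itself with basis $\{1\}$. The real content lies in (c). If $K=C_K$, the dimension is $1$ and Theorem~\ref{THM:godelle}(iii) does not apply; but this degenerate case is trivial, because $\delta\equiv 0$ forces $\delta^n(K)=\{0\}$ for $n\geq 1$ and also forces any stable sequence to be identically zero, so both sets reduce to $\{0\}$. Otherwise, I pick $t\in K\setminus C_K$ and claim $t$ is transcendental over $C_K$: if $p(x)\in C_K[x]$ were a minimal polynomial of $t$, then applying $\delta$ to $p(t)=0$ and using that the coefficients of $p$ are constants would yield $p'(t)\delta(t)=0$; the characteristic-zero hypothesis ensures $p'\not\equiv 0$ with $\deg p'<\deg p$, so minimality of $p$ gives $p'(t)\neq 0$, forcing $\delta(t)=0$, which contradicts $t\notin C_K$. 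Hence $1,t,t^2,\ldots$ are $C_K$-linearly independent and $K$ is infinite-dimensional over $C_K$.

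With all hypotheses verified, Theorem~\ref{THM:godelle}(iii) yields the equality $\text{Stab}(\delta,K)=\text{Attrac}(\delta,K)$ at once. The step I expect to require the most care is the transcendence argument for (c): one must explicitly invoke characteristic zero so that $p'$ is nonzero of strictly smaller degree than $p$, and then appeal to minimality of $p$ to conclude $p'(t)\neq 0$ in $K$. Everything else reduces to bookkeeping against Godelle's framework, and no new differential-algebraic machinery is needed beyond the standard fact that algebraic elements over $C_K$ are themselves constants.
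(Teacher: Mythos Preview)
Your proposal is correct and follows essentially the same approach as the paper's proof: split into the degenerate case $K=C_K$ (where both sets are $\{0\}$) and the infinite-dimensional case, then invoke Godelle's result (Theorem~\ref{THM:godelle}(iii)) using that any nonconstant element is transcendental over $C_K$. The only cosmetic difference is that the paper outsources the transcendence argument to Lemma~3.3.2 of Bronstein's book, whereas you supply the minimal-polynomial computation directly.
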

\begin{proof}
Let $C_k$ be the constant subfield of $K$. Then $K$ can be viewed as a vector space over $C_K$. If there exists $f\in K$ such that $\delta(f)\neq 0$, then $f$ must be transcendental over $C_K$ by Lemma 3.3.2 in~\cite[p.\ 86]{BronsteinBook}. So the dimension $\dim_{C_K}(K)$ is either one or infinite. When
$\dim_{C_K}(K)$ is one, we have $K=C_K$ and then both  $\text{Stab}(\delta, K)$ and $\text{Attrac}(\delta, K)$ are equal to $\{0\}$. So the equality holds.
When $\dim_{C_K}(K)$ is infinite, we get the equality by Proposition 3.3 in~\cite{Godelle2010} since $\delta$ is a linear map on $K$ over $C_K$ and $\dim_{C_K}(\text{Ker}(\delta)) = 1$.
\end{proof}
So it is not needed to distinguish the stable elements and attractive elements in a
dynamical system arising from a differential field.  The central problem considered in this paper is as follows.

\begin{problem}[Stability Problem]\label{PROB:stable}
Given an element $f$ in a differential field $(K, \delta)$, decide
whether $f$ is stable in $(K, \delta)$ or not. More generally, describe the
structure of the stable subset $\text{Stab}(\delta, K)$.
\end{problem}

\begin{definition}\label{DEF:regular}
A differential field  $(K, \delta)$ is said to be~\emph{regular} if there exists $x\in K$ such that $\delta(x)=1$.
\end{definition}

\begin{example}\label{EXAM:regular}
Let $K$ be the field $\bC(x)$ of rational functions in $x$ over
the field of complex numbers $\bC$. If $\delta$ is the usual derivation $d/dx$, then
$(K, \delta)$ is regular since $\delta(x)=1$. If $\delta$ is the Eulerian derivation
$x\cdot d/dx$, then $(K, \delta)$ is not regular since $1/x$ has no anti-derivative in $\bC(x)$.
\end{example}

\begin{prop}\label{PROP:regular}
A differential field $(K, \delta)$ is regular if and only if all
constants are stable in $(K, \delta)$. Moreover, an element $f$ in a regular differential field $(K, \delta)$ is stable if and only if $\delta(f)$ is stable.
\end{prop}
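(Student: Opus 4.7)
The plan is to handle the two equivalences separately, with the key observation being that regularity supplies an antiderivative of $1$, which via multiplication by constants yields antiderivatives of arbitrary constants.

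For the first equivalence, the nontrivial direction assumes $(K,\delta)$ is regular with $\delta(x)=1$ and shows every constant $c\in C_K$ is stable. The natural candidate for a witnessing sequence is $a_i := c\, x^i / i!$. A routine induction using the Leibniz rule gives $\delta(x^n) = n x^{n-1}$, hence $\delta(a_{i+1}) = a_i$, and clearly $a_0 = c$, so $c \in \text{Stab}(\delta,K)$. The converse direction is even shorter: if every constant is stable, apply this to $1 \in C_K$ and read off from any witnessing sequence an element $a_1$ with $\delta(a_1)=1$, so $(K,\delta)$ is regular.

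For the second equivalence, assume $(K,\delta)$ is regular. If $f$ is stable with sequence $a_0 = f,\ a_1,\ a_2,\ldots$ (so $\delta(a_{i+1})=a_i$), then prepending $\delta(f)$ to the sequence, i.e.\ taking $b_0 := \delta(f)$ and $b_{i+1} := a_i$ for $i \geq 0$, gives a valid witness for the stability of $\delta(f)$. Conversely, suppose $\delta(f)$ is stable with witnessing sequence $\{b_i\}_{i \geq 0}$. Then
\[
\delta(b_1 - f) = b_0 - \delta(f) = \delta(f) - \delta(f) = 0,
\]
so $c := b_1 - f$ lies in $C_K$. By the first part of the proposition (which applies because $(K,\delta)$ is regular), $c$ is stable, say with witnessing sequence $\{c_i\}_{i \geq 0}$. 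I would then set $a_0 := f$ and $a_i := b_{i+1} - c_i$ for $i\geq 1$; the identity $a_0 = b_1 - c_0 = f$ holds by construction, and a direct computation gives $\delta(a_{i+1}) = a_i$ for all $i$, establishing stability of $f$.

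The only genuinely delicate step is the last one, where the ``constant of integration'' $c = b_1 - f$ must itself be shown to admit antiderivatives of all orders — this is precisely where regularity enters, via the first part of the proposition. Without regularity, one could have $\delta(f)$ stable yet $f$ not stable, since a stray constant could block the iterated antidifferentiation. Thus the logical structure is that the second equivalence genuinely depends on the first.
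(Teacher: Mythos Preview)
Your proof is correct and follows essentially the same approach as the paper: the sequence $c\,x^i/i!$ witnesses stability of constants, and for the second equivalence both you and the paper reduce to the constant of integration $c = b_1 - f$ and invoke the first part. The only difference is cosmetic: the paper phrases the first direction as showing $c \in \text{Attrac}(\delta,K)$ and then appeals to Proposition~\ref{PROP:stable}, and for the forward direction of the second equivalence it cites Theorem~\ref{THM:godelle}$(i)$, whereas you give direct sequence constructions in both places --- slightly more self-contained, but the same argument.
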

\begin{proof}
Suppose that $(K, \delta)$ is regular. Then there exists $x\in K$ such that $\delta(x)=1$. Let $c$ be any constant in $K$.  Note that
\[c = \delta^i\left(\frac{c}{i!} x^i\right) \quad \text{for any $i\in \bN$}.\]
So $c\in \text{Attrac}(\delta, K) = \text{Stab}(\delta, K)$. The necessity is obvious since $1$ is always a constant.

For the second assertion, suppose that $f$ is stable in $K$. Then Theorem~\ref{THM:godelle}~$(i)$ says that the stability is preserved under the derivation. So $\delta(f)$ is stable.  Conversely, suppose that $\delta(f)$ is stable. Then there exists a sequence $\{g_i\}_{i\in \bN}$ in $K$ such that $g_0 = \delta(f)$ and $g_i = \delta(g_{i+1})$ for all $i\in \bN$. It is clear that all of the $g_i$'s are stable by definition. By the equality $\delta(f) = \delta(g_1)$, there exists $c\in C_K$ such that $f = g_1 + c$. Since $K$ is regular, $c$ is also stable in $K$ by the first assertion. Thus, $f$ is stable in $K$.
\end{proof}

\begin{remark}\label{REM:regular}
We point out that the regularity assumption in the above proposition is needed.
In the differential field $(\bC(x), x\cdot d/dx)$ which is not regular shown in Example~\ref{EXAM:regular}, any nonzero constant $c$ is not stable but its derivative is always stable.
 \end{remark}

\begin{lemma}\label{LEM:stable}
Let $(K, \delta)$ be a regular differential field with $\delta(x)=1$ and $f\in K$. Then
\begin{itemize}
\item[$(i)$] Let $n$ be a positive integer. Then $f = \delta^n(g)$ for some $g\in K$ if and only if for any $i$ with $0\leq i \leq n-1$, there exists $h_i\in K$ such that $x^i f = \delta(h_i)$.
\item[$(ii)$] $f$ is stable in $(K, \delta)$ if and only if
for all $i\in \bN$, $x^i f = \delta(g_i)$ for some $g_i\in K$.
\end{itemize}
\end{lemma}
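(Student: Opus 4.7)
My plan is to prove part (i) by exploiting the commutation identity~\eqref{EQ:drule2} applied to $f = x^i$, which packages iterated integration by parts into a single closed form. Part (ii) will then drop out of (i) combined with Proposition~\ref{PROP:stable}.

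For the ``only if'' half of~(i), suppose $f = \delta^n(g)$. Using~\eqref{EQ:drule2} applied to $g$, I would expand
\[
x^i f \;=\; x^i \delta^n(g) \;=\; \sum_{k=0}^{n} (-1)^k \binom{n}{k}\, \delta^{n-k}\!\bigl(\delta^k(x^i)\, g\bigr).
\]
Because $0 \le i \le n-1$, every term with $k > i$ vanishes (since $\delta^k(x^i) = 0$), while every surviving term satisfies $n - k \ge n - i \ge 1$. Thus each surviving summand is $\delta$ of an explicit element of $K$, and their sum exhibits $x^i f$ as $\delta(h_i)$ for some $h_i \in K$.

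For the ``if'' half of~(i), I would induct on $n$. The base case $n = 1$ is the hypothesis $f = x^0 f = \delta(h_0)$. For the inductive step, assume the statement for $n - 1$. The hypotheses $x^i f = \delta(h_i)$ for $0 \le i \le n - 2$ let the induction hypothesis produce $g' \in K$ with $f = \delta^{n-1}(g')$; to upgrade by one more derivative I use the extra datum $x^{n-1} f = \delta(h_{n-1})$. Expanding $x^{n-1} \delta^{n-1}(g')$ via~\eqref{EQ:drule2} with $n$ replaced by $n - 1$, every term with $0 \le k \le n - 2$ is again $\delta$ of an explicit element of $K$, while the solitary term $k = n - 1$ contributes $(-1)^{n-1}(n-1)!\, g'$, because $\delta^{n-1}(x^{n-1}) = (n-1)!$ and the outer operator $\delta^{0}$ is the identity. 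Equating with $\delta(h_{n-1})$ and using that $(n-1)!$ is a nonzero element of $C_K$, the $C_K$-linearity of $\delta$ forces $g' = \delta(\tilde g)$ for some $\tilde g \in K$; hence $f = \delta^{n-1}(\delta(\tilde g)) = \delta^n(\tilde g)$.

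Part~(ii) is then immediate from (i): stability means $f \in \delta^n(K)$ for every $n \in \bN$, which by (i) is exactly the condition that $x^i f \in \delta(K)$ for every $i \in \bN$. The equivalence of $f \in \bigcap_n \delta^n(K) = \text{Attrac}(\delta, K)$ with stability is Proposition~\ref{PROP:stable}. The main delicate point I foresee is the inductive step of~(i): the cut-off $i \le n-1$ is precisely what allows exactly one ``boundary'' term $(-1)^{n-1}(n-1)!\, g'$ to escape being a $\delta$-image, and it is this single non-$\delta$ summand that provides the leverage to advance the induction.
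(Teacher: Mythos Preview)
Your proposal is correct and follows essentially the same route as the paper's own proof: both directions of~(i) are handled via the operator identity~\eqref{EQ:drule2}, and the ``if'' half is proved by induction on $n$, using the expansion of $x^{n-1}\cdot D^{n-1}$ to isolate the single constant term $(-1)^{n-1}(n-1)!$ that lets you peel off one more antiderivative from the $g'$ supplied by the induction hypothesis. The only cosmetic difference is that the paper applies $D^{m-1}$ to both sides of $x^{m-1}f=\delta(h_{m-1})$ to exhibit an explicit $g$ with $f=\delta^m(g)$, whereas you rearrange directly to conclude $g'\in\delta(K)$; your explicit invocation of Proposition~\ref{PROP:stable} in part~(ii) to pass from $\bigcap_n\delta^n(K)$ to stability is a point the paper leaves implicit.
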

\begin{proof} For showing the sufficiency of the first assertion, we suppose that $f = \delta^n(g)$ for some $g\in K$. Then the formula~\eqref{EQ:drule2} implies that for any $i$ with $0\leq i \leq n-1$, we have
\[x^i\cdot  D^n = \sum_{j=0}^n (-1)^j \binom{n}{j}D^{n-j}\cdot \delta^j(x^i). \]
Since $\delta^j(x^i) =0$ if $j>i$, we have $x^i\cdot  D^n = D\cdot L_i$ with
\[L_i = \sum_{j=0}^i (-1)^j j! \binom{n}{j} \binom{i}{j}D^{n-j-1}\cdot x^{i-j}.\]
Then $x^i f = x^i \delta^n(g) = (x^i\cdot D^n)(g) = \delta(h_i)$ with $h_i = L_i(g)\in K$.

We prove the necessity by induction on $n$.  The assertion holds obviously in the base case when $n=1$. We now assume that the assertion hold for $n< m$. Suppose that
there exist $h_0, \ldots, h_{m-1}\in K$ such that
\[f = \delta(h_0), \quad \ldots, \quad x^{m-1}f = \delta(h_{m-1}).\]
The goal is to show that $f = \delta^m(g)$ for some $g\in K$.
By the induction hypothesis, there exists $u\in K$ such that $f = \delta^{m-1}(u)$ using the first $m-1$ equalities $x^j f = \delta(h_j)$ for $j$ with $0\leq j \leq m-2$. By the formula~\eqref{EQ:drule2}, we have
\[ x^{m-1} \cdot D^{m-1}  = D \cdot L + (m-1)!\]
for some $L \in K\langle D \rangle $. We claim that we
can choose
\[g =\frac{1}{(m-1)!}(h_{m-1} - L(u)).  \]
Since $f = \delta^{m-1}(u)$ and $x^{m-1} f = \delta(h_{m-1})$, we have
\begin{align*}
D^{m-1} \cdot (x^{m-1} f) & =(D^{m-1} \cdot (x^{m-1} \cdot D^{m-1}))(u)\\
& = (D^{m-1}\cdot (DL + (m-1)!))(u) \\
& =  D^m(L(u)) + (m-1)!D^{m-1}(u)\\
& = \delta^m(L(u)) + (m-1)! f.
\end{align*}
Since $D^{m-1} \cdot (x^{m-1} f) = \delta^{m-1}(\delta(h_{m-1})) = \delta^m(h_{m-1})$, we get $\delta^m(L(u)) + (m-1)! f = \delta^m(h_{m-1})$, which implies the claim.

The second assertion follows immediately from the first one.
\end{proof}

\begin{theorem}\label{THM:stable}
Let $(K, \delta)$ be a regular differential field with $\delta(x)=1$. Then the stable subset $\text{Stab}(\delta, K)$ forms a $C_K[x]$-module and it is closed under differentiation.
\end{theorem}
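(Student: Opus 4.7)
The plan is to verify three closure properties of $\text{Stab}(\delta, K)$: closure under $\delta$, closure under addition and $C_K$-scalar multiplication, and closure under multiplication by $x$. The first two are essentially formal, while the third is where the previous lemma pays off.

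First I would dispose of the closure under differentiation. By Godelle's Theorem~\ref{THM:godelle}$(i)$, the stable subset satisfies $\delta(\text{Stab}(\delta,K)) = \text{Stab}(\delta,K)$, so $\delta(f)\in\text{Stab}(\delta,K)$ whenever $f\in\text{Stab}(\delta,K)$. This also gives closure under $\delta$ without needing regularity.

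Next I would handle the $C_K$-module structure by explicitly combining witness sequences. If $f,g\in\text{Stab}(\delta,K)$ with witnesses $\{f_i\}$ and $\{g_i\}$ (so $f_0=f$, $g_0=g$, and $\delta(f_{i+1})=f_i$, $\delta(g_{i+1})=g_i$), and if $c\in C_K$, then $\{cf_i+g_i\}$ is a witness for $cf+g$ by the $C_K$-linearity of $\delta$. Hence $\text{Stab}(\delta,K)$ is a $C_K$-linear subspace of $K$.

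The key step is closure under multiplication by $x$, and here I would invoke Lemma~\ref{LEM:stable}$(ii)$. Suppose $f\in\text{Stab}(\delta,K)$. Then for every $i\in\bN$, we have $x^if=\delta(g_i)$ for some $g_i\in K$. To show $xf\in\text{Stab}(\delta,K)$, I would apply Lemma~\ref{LEM:stable}$(ii)$ in reverse: it suffices to check that $x^i(xf)=x^{i+1}f$ is integrable in $K$ for every $i\in\bN$, which is exactly the statement of the hypothesis for the shifted index $i+1$. Iterating, $x^jf\in\text{Stab}(\delta,K)$ for every $j\in\bN$. Combined with the $C_K$-linearity, this shows that $\text{Stab}(\delta,K)$ is closed under multiplication by arbitrary polynomials in $C_K[x]$, giving the desired $C_K[x]$-module structure.

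The only part that required genuine content was the multiplication-by-$x$ step, and the main obstacle would have been to construct witness sequences for $xf$ directly from those for $f$; the elegant workaround is precisely Lemma~\ref{LEM:stable}$(ii)$, which reduces stability to the family of integrability conditions $\{x^if\in\delta(K)\}_{i\in\bN}$, a condition manifestly stable under shifting the index.
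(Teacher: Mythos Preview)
Your proof is correct and follows essentially the same route as the paper's: closure under $\delta$ via Godelle's Theorem~\ref{THM:godelle}$(i)$, $C_K$-linearity by direct construction (the paper instead observes $\text{Stab}(\delta,K)=\bigcap_i\delta^i(K)$ is an intersection of subspaces, but this is only a cosmetic difference), and closure under multiplication by $x$ via Lemma~\ref{LEM:stable}$(ii)$ and the index-shift observation $x^i(xf)=x^{i+1}f$. The paper handles general $x^n$ in one shot rather than iterating from the $n=1$ case, but the underlying idea is identical.
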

\begin{proof}
Since the derivation $\delta$ is a linear map on $K$ over $C_K$, we have $\delta^i(K)$ is a linear subspace of $K$ for any $i\in \bN$. By Proposition~\ref{PROP:stable}, $\text{Stab}(\delta, K) = \cap_{i\in \bN} \delta^i(K)$ is also a linear subspace of $K$ over $C_K$. To show that $\text{Stab}(\delta, K)$ forms a $C_K[x]$-module, it suffices to prove that for any $f\in \text{Stab}(\delta, K)$ and $n\in \bN$, we have $x^n f \in \text{Stab}(\delta, K)$. By Lemma~\ref{LEM:stable}, an element $g\in K$ is stable if and only if for all $i\in \bN$, $x^i g = \delta(h_i)$ for some $h_i\in K$. Since $f$ is stable, we have for all $j\in \bN$, $x^j f = \delta(u_j)$ for some $u_j\in K$. This implies that
 for all $i, n\in \bN$, $x^i(x^n f) = \delta(v_{i, n})$ for some $v_{i, n}\in K$.
 So $x^nf$ is stable for any $n\in \bN$.  The closeness of $\text{Stab}(\delta, K)$  under differentiation follows from Theorem~\ref{THM:godelle}~$(i)$.
\end{proof}

We now recall the notion of orders of rational functions and their basic properties from~\cite[Chapter 4]{BronsteinBook}. Let $F$ be a field of characteristic zero and $F(t)$ be the field of rational functions in $t$ over $F$. Let $p\in F[t]$ be an irreducible polynomial and $f\in F(t)$. We can always write $f = p^m a/b$, where $m \in \bZ, a, b \in F[t]$ with $\gcd(a, b) = 1$ and $p\nmid ab$. We call the integer $m$ the \emph{order} of $f$ at $p$, denoted by $\nu_p(f)$. Conventionally, we set $\nu_p(0)= +\infty$. Let $\delta$ be a derivation on $F$ and we can uniquely extend $\delta$ to $F(t)$ by fixing the value $\delta(t)\in F(t)$. We say that $(F(t), \delta)$ is a \emph{monomial extension} of $(F, \delta)$ if $t$ is transcendental over $F$ and $\delta(t)\in F[t]$.

\begin{definition}\label{DEF:sepcial}
Let $(F(t), \delta)$ be a monomial extension of the differential field $(F, \delta)$. A polynomial $p\in F[t]$ is said to be \emph{normal} if $\gcd(p, \delta(p))=1$
and it is said to be \emph{special} if $\gcd(p, \delta(p))=p$. A rational function $f\in F(t)$ is said to be \emph{simple} if the denominator of $f$ is normal and
said to be \emph{reduced} if the denominator of $f$ is special.
\end{definition}

\begin{example}\label{EXAM:special}
Let $F = \bC$ and $t=x$. If $\delta = d/dx$, then normal polynomials in $F[t]$ are
squarefree polynomials and special polynomials are just constants in $\bC$.  If $\delta = x\cdot d/dx$, then for any $n\in \bN$, $x^n$ is a special polynomial. In fact, we can show that these are the only possible special polynomials.
\end{example}

\begin{prop}\label{PROP:val}
Let $(F(t), \delta)$ be a monomial extension of $(F, \delta)$ and $f, g \in F(t)$. For any irreducible normal polynomial $p \in F[t]$, we have
 \begin{itemize}
 \item[$(i)$] $\nu_p(fg) = \nu_p(f) + \nu_p(g)$.
 \item[$(ii)$] $\nu_p(f+g) \geq \min\{\nu_p(f), \nu_p(g)\}$ and equality holds if $\nu_p(f) \neq \nu_p(g)$.
  \item[$(iii)$] $\nu_p(\delta(f)) = \nu_p(f)-1$ if $\nu_p(f) \neq 0$,
  $\nu_p(\delta(f))\geq 0$ if  $\nu_p(f) = 0$ . In particular, for any $i\in \bN$, $\nu_p(\delta^i(f)) = \nu_p(f)-i$ if $\nu_p(f)<0$.
 \end{itemize}
\end{prop}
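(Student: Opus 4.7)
The plan is to work throughout with the canonical representation $f = p^m a / b$, where $m = \nu_p(f)$, $a, b \in F[t]$, $\gcd(a,b)=1$, and $p \nmid ab$; this is exactly the description that makes $\nu_p$ well-defined. Parts $(i)$ and $(ii)$ are really just the defining axioms of a discrete valuation and will not require normality. For $(i)$, multiplying $f = p^m a/b$ and $g = p^n c / d$ yields $fg = p^{m+n}(ac)/(bd)$; since $F[t]$ is a UFD and $p$ divides none of $a, b, c, d$, it divides neither $ac$ nor $bd$ even after reduction to lowest terms, so $\nu_p(fg) = m + n$. For $(ii)$, rescale by $p^{-\min\{m,n\}}$ to move both summands into the local ring $F[t]_{(p)}$; their sum still lies there, which gives the inequality $\nu_p(f+g) \geq \min\{m,n\}$, and when $m \neq n$ exactly one rescaled summand is a unit while the other lies in the maximal ideal, forcing equality.

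The substance of the proposition is $(iii)$. Writing $f = p^m a / b$ as above and using the product and quotient rules, I would compute
\[
\delta(f) \;=\; \frac{p^{m-1}}{b^2}\,\bigl(\, m\,\delta(p)\,a\,b + p\,(\delta(a)\,b - a\,\delta(b))\,\bigr).
\]
The second summand in the parentheses is visibly a multiple of $p$, so the question reduces to the $p$-adic valuation of $m\,\delta(p)\,a\,b$. Here is where normality enters: by assumption $\gcd(p, \delta(p)) = 1$, hence $p \nmid \delta(p)$. Combined with $p \nmid a$, $p \nmid b$, and the fact that $m \neq 0$ as an element of $F$ (because $F$ has characteristic zero), this yields $p \nmid m\,\delta(p)\,a\,b$ whenever $m \neq 0$. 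Consequently the parenthesized expression has $\nu_p = 0$, and since $\nu_p(b^2) = 0$ as well, part $(i)$ gives $\nu_p(\delta(f)) = m - 1$. When $m = 0$ the $p^{m-1}$ factor is absorbed and the expression simplifies to $(\delta(a)b - a\,\delta(b))/b^2$, whose denominator is coprime to $p$, delivering $\nu_p(\delta(f)) \geq 0$.

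The main obstacle is the non-cancellation step: one must be sure that the leading summand $m\,\delta(p)\,a\,b$ is not accidentally divisible by $p$. The normality hypothesis on $p$ is the exact condition that prevents this, since without $p \nmid \delta(p)$ the two summands could combine to a higher $p$-power and the valuation might drop by more than one. Once the single-step statement of $(iii)$ is established, the iterated claim is immediate: if $\nu_p(f) < 0$, then $\nu_p(\delta(f)) = \nu_p(f) - 1 < 0$, so the strict negativity is preserved and a straightforward induction yields $\nu_p(\delta^i(f)) = \nu_p(f) - i$ for every $i \in \bN$.
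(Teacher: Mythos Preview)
Your argument is correct. The computation in part $(iii)$ is accurate, and you have correctly identified the role of normality: it guarantees $p\nmid\delta(p)$, which together with $p\nmid ab$ and $m\neq 0$ in characteristic zero ensures the leading summand $m\,\delta(p)\,a\,b$ is not divisible by~$p$, so no cancellation can push the valuation below $m-1$. One small point worth making explicit is that normality also forces $\delta(p)\neq 0$ (since $\gcd(p,0)=p\neq 1$), so the leading summand is genuinely nonzero; you implicitly rely on this when concluding that the parenthesized expression has $\nu_p=0$.

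As for comparison with the paper: the paper does not prove this proposition at all but simply cites Lemma~4.1.1 and Theorem~4.4.2 of Bronstein's book. Your write-up is therefore a self-contained replacement for that citation. The argument you give is the standard one and is essentially what appears in Bronstein, so there is no divergence in mathematical content---you have simply supplied the details that the paper outsourced.
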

\begin{proof}
See Lemma 4.1.1 and Theorem 4.4.2 in~\cite[Chapter 4]{BronsteinBook}.
\end{proof}

\begin{cor}\label{COR:normal}
Let $(F(t), \delta)$ be a monomial extension of $(F, \delta)$ and $f \in F(t)$.
If $f$ is stable in $(F(t), \delta)$, then $\nu_p(f)\geq 0$ for any irreducible normal polynomial $p \in F[t]$, i.e., all of the factors of the denominator of $f$ are special polynomials.
\end{cor}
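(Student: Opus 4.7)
The plan is to argue by contradiction and exploit the sharp arithmetic of $\nu_p$ under $\delta$ provided by Proposition~\ref{PROP:val}$(iii)$. Suppose that there is an irreducible normal polynomial $p \in F[t]$ with $\nu_p(f) = -k$ for some $k > 0$. Since $f$ is stable, Proposition~\ref{PROP:stable} gives $f \in \bigcap_{i\in\bN}\delta^i(F(t))$, so for every $i\in\bN$ we may pick $g_i \in F(t)$ with $\delta^i(g_i) = f$. I would then split on the sign of $\nu_p(g_i)$.

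First, I claim that $\nu_p(g) \geq 0$ forces $\nu_p(\delta(g)) \geq 0$: indeed, if $\nu_p(g) = 0$ this is the second clause of Proposition~\ref{PROP:val}$(iii)$, while if $\nu_p(g) \geq 1$ the first clause gives $\nu_p(\delta(g)) = \nu_p(g) - 1 \geq 0$. Iterating this observation, if $\nu_p(g_i) \geq 0$ for some $i$, then $\nu_p(f) = \nu_p(\delta^i(g_i)) \geq 0$, contradicting $\nu_p(f) = -k < 0$. Hence we must have $\nu_p(g_i) < 0$ for every $i \in \bN$. But the final clause of Proposition~\ref{PROP:val}$(iii)$ then yields $\nu_p(f) = \nu_p(\delta^i(g_i)) = \nu_p(g_i) - i$, i.e., $\nu_p(g_i) = \nu_p(f) + i = -k + i$. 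Taking $i = k$ forces $\nu_p(g_k) = 0$, contradicting $\nu_p(g_k) < 0$.

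Either branch of the dichotomy thus collapses, so no such $p$ exists and $\nu_p(f) \geq 0$ for every irreducible normal $p$. The equivalent reformulation — that the denominator of $f$ has only special irreducible factors — follows because Proposition~\ref{PROP:val}$(i)$ shows $\nu_p$ extends multiplicatively, so any irreducible factor of the denominator with $\nu_p(f) < 0$ would have to be non-normal, i.e., special. The only delicate point is verifying the auxiliary claim that $\delta$ preserves nonnegativity of $\nu_p$, which cleanly reduces to the two separate clauses of $(iii)$; beyond that, the argument is just bookkeeping with the order function.
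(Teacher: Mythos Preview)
Your argument is correct and follows the same route as the paper's proof: assume $\nu_p(f)=-k<0$, take an $i$-th antiderivative $g_i$, and use the valuation drop $\nu_p(\delta^i(g_i))=\nu_p(g_i)-i$ from Proposition~\ref{PROP:val}$(iii)$ to force the contradiction $\nu_p(g_k)=0$. Your version is actually more careful than the paper's, which simply asserts $\nu_p(g_n)<0$ without justification; your auxiliary claim that $\nu_p(g)\geq 0\Rightarrow \nu_p(\delta(g))\geq 0$ is exactly what fills that step.
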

\begin{proof}
Suppose that there exists some irreducible normal polynomial $p\in F[t]$ such that $\nu_p(f)<0$. Since $f$ is stable in $(F(t), \delta)$, there exists $g_i \in F(t)$ such that $f = \delta^i(g_i)$ for any $i \in \bN$.
Let $n = -\nu_p(f)$. By Proposition~\ref{PROP:val}~$(iii)$ and the equality $f = \delta^n(g_n)$, we have $\nu_p(g_n)<0$ and
$\nu_p(\delta^n(g_n)) = \nu_p(g_n) -n < n$, which contradicts with the fact that $\nu_p(\delta^n(g_n)) = \nu_p(f) = n$.
\end{proof}
\section{Stable elementary functions}\label{SECT:elefun}
We now study the stability problem on elementary functions by looking at the classical integration problem of elementary functions through a dynamical lens.

We first recall the differential-algebraic formulation of elementary functions and their integration problems following the presentation in~\cite{BronsteinBook}.
\begin{definition}\label{DEF:elementary}
Let $(K, \delta)$ be a differential extension of $(k, \delta)$ and $t\in K$. We say that $t$ is \emph{elementary} over $k$ if one of the following conditions holds:
\begin{itemize}
\item[$(i)$] $t$ is algebraic over $k$, i.e., there exists $P\in k[X]\setminus K$ such that $P(t)=0$;
\item[$(ii)$] $t$ is exponential over $k$, i.e., $t\neq 0$ and there exists $a\in k$ such that $\delta(t)/t = \delta(a)$, where we write symbolically $t = \exp(a)$;
\item[$(iii)$] $t$ is logarithmic over $k$, i.e., there exists $a\in k\setminus\{0\}$ such that $\delta(t)=\delta(a)/a$, where we write symbolically $t = \log(a)$.
\end{itemize}
$K$ is an \emph{elementary extension} of $k$ if there are $t_1, \ldots, t_n\in K$ such that $K = k(t_1, \ldots, t_n)$ and $t_i$ is elementary over $k(t_1, \ldots, t_{i-1})$ for $i\in \{1, \ldots, n\}$. An element $f\in k$ is said to be \emph{elementary integrable} over $k$ if there exists an elementary extension
$K$ of $k$ and $g\in K$ such that $f = \delta(g)$. An elementary function is
an element of any elementary extension of the field $(\bC(x), d/dx)$.
\end{definition}


The classical problem of integration in finite terms is deciding whether
a given elementary function is elementary integrable over $\bC(x)$ or not.
For example, elementary functions
\[ \exp(x^2), \quad \frac{\exp(x)}{x},
\quad \frac{\log(x)}{x-1}, \quad \frac{1}{\sqrt{x(x-1)(x-2)}}\]
are not elementary integrable over $\bC(x)$. Liouville's theorem is the fundamental principle for elementary integration~\cite[Chap.\ 5.5]{BronsteinBook}. For later use, we recall this theorem in the case when the constant field is
algebraically closed since we will study elementary functions over $\bC(x)$.

\begin {theorem}[Liouville's theorem]\label{THM:liouville}
Let $(K, \delta)$ be a differential field with its constant subfield being algebraically closed  and $f \in K$. If there exist an elementary extension $(E, \delta)$ of $(K, \delta)$ and $g\in E$ such that $f = \delta(g)$, then
there exist $g\in K$, $h_1, \ldots, h_n\in K\setminus \{0\}$ and $c_1, \ldots, c_n\in C_K$ such that
\[f = \delta(g) +\sum_{i=1}^n c_i \frac{\delta(h_i)}{h_i}.\]
\end{theorem}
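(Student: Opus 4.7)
The plan is to reduce to the case of a single elementary generator and argue by induction on the length $N$ of an elementary tower $K = K_0 \subset K_1 \subset \cdots \subset K_N = E$ with $K_j = K_{j-1}(t_j)$ and $t_j$ elementary over $K_{j-1}$. It suffices to establish the following single-step reduction: whenever $f \in K$ admits a decomposition
\[f = \delta(g) + \sum_{i=1}^n c_i\, \frac{\delta(h_i)}{h_i}\]
with $g, h_i \in K(t)$, $c_i \in C_K$, and $t$ elementary over $K$, then $f$ admits such a decomposition with $g, h_i \in K$. The full theorem then follows by starting from the trivial Liouville form $f = \delta(g) + 0$ over $E$ and descending the tower $N$ times. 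Before splitting into cases, I would normalize the logarithmic part using the identity $\delta(uv)/(uv) = \delta(u)/u + \delta(v)/v$, factoring each $h_i \in K(t)$ into its leading coefficient in $K$ times monic irreducibles in $K[t]$; thus every $h_i$ is either in $K$ or monic irreducible in $K[t]$.

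Case 1 ($t$ algebraic over $K$): let $\sigma_1, \ldots, \sigma_m$ denote the embeddings of $K(t)$ into a normal closure $L$ of $K(t)/K$. Each $\sigma_j$ commutes with the canonical extension of $\delta$ to $L$, so applying $\sigma_j$ to the decomposition and averaging over $j$ gives
\[f = \delta\!\left(\frac{1}{m}\sum_{j=1}^m \sigma_j(g)\right) + \sum_{i=1}^n \frac{c_i}{m}\, \frac{\delta\bigl(\prod_{j=1}^m \sigma_j(h_i)\bigr)}{\prod_{j=1}^m \sigma_j(h_i)},\]
where the trace $\sum_j \sigma_j(g)$ and norms $\prod_j \sigma_j(h_i)$ all lie in $K$, and the rescaled constants $c_i/m$ remain in $C_K$ since $\bQ \subset C_K$.

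Case 2 ($t$ transcendental over $K$): here $\delta(t) \in K[t]$, so $(K(t), \delta)$ is a monomial extension and Proposition~\ref{PROP:val} applies. For every monic irreducible normal $p \in K[t]$, the identity $\nu_p(\delta(g)) = \nu_p(g) - 1$ for $\nu_p(g) < 0$, combined with $\nu_p(\delta(h_i)/h_i) = -1$ exactly when $h_i = p$ and $\geq 0$ otherwise, forces every pole of $g$ at a normal prime to be simple and to cancel against a corresponding logarithmic summand. A partial-fraction projection at each such $p$ identifies the matching $c_i$ as a residue of $g$, and algebraic closedness of $C_K$ ensures these residues lie in $C_K$ rather than in some proper algebraic extension. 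What survives of $g$ is supported only on special primes ($p = t$ in the exponential case, none in the logarithmic case by Example~\ref{EXAM:special}) together with a polynomial part in $K[t]$; a direct degree count, using the degree behaviour of $\delta$ on $K[t]$ in each subcase, rewrites this remaining piece as $\delta(g_0) + c\, \delta(h_0)/h_0$ with $g_0, h_0 \in K$, completing the reduction.

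The main obstacle is Case 2: the order-function bookkeeping at normal primes, the identification of residues as genuine elements of $C_K$ (which is exactly where algebraic closure of $C_K$ enters), and the analysis of the $t$-polynomial part of $g$ in each of the exponential and logarithmic subcases. The algebraic subcase is comparatively painless thanks to the trace--norm trick, and the inductive descent through the elementary tower is essentially bookkeeping once the single-step reduction is in hand.
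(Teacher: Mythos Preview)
The paper does not supply its own proof of this theorem: it states Liouville's theorem as a classical result and points the reader to \cite[Chap.~5.5]{BronsteinBook}. So there is nothing in the paper to compare your argument against line by line.

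That said, your outline is precisely the standard Rosenlicht-style proof that the cited reference gives: induction on the length of the elementary tower, reduction to a single generator, the trace--norm averaging trick in the algebraic case, and the valuation/residue/degree analysis in the transcendental (logarithmic or exponential) monomial case. The normalization of the $h_i$ into monic irreducibles, the use of Proposition~\ref{PROP:val} to force simple poles of $g$ at normal primes, and the observation that algebraic closedness of $C_K$ keeps the residues inside $C_K$ are all the right ingredients. One point you glide over but would need to make explicit in a full write-up is that the constant subfield does not grow along the tower (i.e.\ $C_{K(t)} = C_K$), which is what guarantees the $c_i$ you extract at each step really live in $C_K$; this is where the transcendence of $t$ in the logarithmic and exponential cases is used in an essential way, not merely to make $K(t)$ a monomial extension. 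With that caveat, your plan is correct and matches the literature proof the paper defers to.
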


\begin{definition}\label{DEF:elemstab}
Let $(K, \delta)$ be a differential field and $f\in K$.
We say that $f$ is \emph{stable} over the elementary extensions of $K$ if
there exists a sequence $\{g_i\}_{i\in \bN}$ such that for all $i\in \bN$, $g_i$ is an element of some elementary extension of $K$ and $f = \delta^i(g_i)$.
An elementary function that is stable over the elementary extensions of $\bC(x)$ is called a \emph{stable elementary functions}.
\end{definition}

\begin{lemma}\label{LEM:logmonomial}
Let $(F, \delta)$ be  a regular differential
field with $\delta(x)=1$ and $C_F$ being algebraically closed.
Let $(F(t), \delta)$ be a monomial extension of $(F, \delta)$ with $C_{F(t)}=C_F$.
If $p\in F[t]{\setminus} F$ is a normal polynomial, then
$T = \log(p)$ is not elementary integrable over $F(t)$.
\end{lemma}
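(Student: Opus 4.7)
Plan. The plan is to assume $T=\log(p)$ is elementary integrable over $F(t)$ and derive a contradiction through two nested applications of Liouville's theorem combined with valuation analysis. The preliminary observation is that $T$ is transcendental over $F(t)$ with $C_{F(t)(T)}=C_F$. Indeed, since $p$ is normal it factors as $p=\prod_k p_k$ into distinct monic irreducibles of $F[t]$, each $p_k$ itself normal; so $\delta(p_k)\not\equiv 0\pmod{p_k}$ and Proposition~\ref{PROP:val}(iii) implies that no element of $F(t)$ has a derivative with the simple-pole pattern of $\delta(p)/p=\sum_k\delta(p_k)/p_k$. Hence $\delta(p)/p\notin\delta(F(t))$, so $T$ is transcendental and the constant subfield is preserved.

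Liouville's theorem (Theorem~\ref{THM:liouville}) applied to $T$ in $F(t)(T)$ gives
\[
T \;=\; \delta(g) \;+\; \sum_i c_i\,\frac{\delta(h_i)}{h_i}, \qquad g\in F(t)(T),\ h_i\in F(t)(T)^*,\ c_i\in C_F.
\]
I would first clear $T$-denominators. Because $\delta(T)=\delta(p)/p\in F(t)$, every monic irreducible $q\in F(t)[T]$ of positive $T$-degree satisfies $\deg_T\delta(q)<\deg_T q$ and so is normal in $F(t)[T]$; by Proposition~\ref{PROP:val}(iii) any $T$-pole of $g$ would yield a pole of strictly higher order in $\delta(g)$ than the simple poles from the $\delta(h_i)/h_i$ can cancel, forcing $g\in F(t)[T]$. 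Decomposing each $h_i$ into monic irreducibles of $F(t)[T]$ times a factor in $F(t)^*$, the positive-$T$-degree irreducibles produce independent simple $T$-poles that the polynomial $T$ on the left cannot absorb, so all such coefficients vanish. Writing $g=\sum_{k=0}^n g_kT^k$ with $g_k\in F(t)$ and matching $T^k$-coefficients, a top-down induction kills every $g_k$ for $k\ge 2$: a nonzero such $g_k$ is a constant by the $T^k$-equation, and dividing the $T^{k-1}$-equation by it would express $\delta(p)/p$ as a derivative in $F(t)$, contradicting the preliminary step. Hence $n=1$, the $T^1$-equation gives $g_1=x+c_1$ with $c_1\in C_F$, and the $T^0$-equation reduces the problem to
\[
(x+c_1)\,\frac{\delta(p)}{p} \;=\; -\delta(g_0)\;-\;\sum_i c_i\,\frac{\delta(h_i)}{h_i}, \qquad g_0,\,h_i\in F(t).
\]

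The decisive step is residue analysis at each irreducible factor $p_k$ of $p$, this time in the monomial $F(t)$ over $F$. Decomposing each $h_i\in F(t)^*$ as a unit in $F^*$ times monic irreducibles of $F[t]$, the right side rewrites as $-\delta(g_0)-\beta-\sum_j d_j\,\delta(r_j)/r_j$ with $\beta\in F$, $d_j\in C_F$, and $r_j$ distinct monic irreducibles of $F[t]$. Normality of $p_k$ forces $g_0$ to be regular at $p_k$ (else $\delta(g_0)$ would have a pole of order $\ge 2$, while the left side has only a simple pole), so matching simple-pole residues at $p_k$ yields
\[
(x+c_1+d(p_k))\,\delta(p_k)\;\equiv\;0\pmod{p_k},
\]
where $d(p_k)=\sum_{j:\,r_j=p_k} d_j\in C_F$. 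Since $\gcd(p_k,\delta(p_k))=1$ by normality, we get $p_k\mid(x+c_1+d(p_k))$ in $F[t]$; comparing $t$-degrees ($\deg_t p_k\ge 1$ against $\deg_t(x+c_1+d(p_k))=0$) forces $x+c_1+d(p_k)=0$, placing $x$ in $C_F$ and contradicting $\delta(x)=1$. The main obstacle is running the valuation analysis cleanly across two nested monomial extensions: we use normality of $p$ in $F[t]$ to get the residue contradiction at the bottom level, and the fact that $T$ is a logarithmic monomial over $F(t)$ (so $\delta(T)\in F(t)$ has no $T$) to ensure every positive-$T$-degree irreducible of $F(t)[T]$ is automatically normal at the top level.
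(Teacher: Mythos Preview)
Your argument is correct and follows the same overall arc as the paper's proof, but it is considerably more self-contained. The paper invokes two results from Bronstein's book as black boxes: Theorem~5.8.2 immediately gives the form $T=\delta(cT^{2}+aT)+b$ with $b$ elementary integrable over $F(t)$, which collapses your steps (Liouville in $F(t)(T)$, clearing $T$-poles, eliminating the logarithmic terms in $T$, and the top-down coefficient induction) into a single citation; and Lemma~5.6.2 (the residue criterion) is then used to conclude that $x+\lambda$ must be a constant from the elementary integrability of $-(x+\lambda)\,\delta(p)/p$, which is exactly what your final residue computation at the $p_k$ establishes by hand. What your approach buys is transparency: every step is reduced to Liouville's theorem plus Proposition~\ref{PROP:val}, so the reader sees precisely where normality of $p$ and regularity of $F$ enter. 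What the paper's approach buys is brevity, at the cost of importing the structure theorem for logarithmic monomials and the residue criterion from Bronstein. One small wording issue: your ``top-down induction'' is really just the single observation that the leading $g_n$ (for $n\ge 2$) is a nonzero constant and then the $T^{n-1}$-equation forces $\delta(p)/p\in\delta(F(t))$, a contradiction; no genuine induction is needed once you phrase it that way.
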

\begin{proof}
Suppose that $T$ is elementary integrable over $F(t)$. Then Theorem 5.8.2 in~\cite{BronsteinBook} implies that there are $c\in C_F$ and $a, b\in F(t)$ such that $T = \delta(cT^2 + aT) + b$ and $b$ is elementary integrable over $F(t)$.
By equating the coefficients in $T$, we get
\[1 = 2c\frac{\delta(p)}{p} + \delta(a) \quad \text{and} \quad a\frac{\delta(p)}{p} + b=0\]
Since $p$ is normal, Proposition~\ref{PROP:val}~$(iii)$ implies that $c=0$. Then $a=x+\lambda$ for some $\lambda \in C_F$ and $b = -(x+\lambda)\delta(p)/p$.
By Lemma~5.6.2 in~\cite{BronsteinBook}, we get that $x+\lambda$ must be a constant since $b$ is elementary integrable. This is a contradiction.
\end{proof}
As a special case of the above lemma, we have that $\log(\log(x))$ is not elementary integrable over $\bC(x)$.  We now extend Corollary~\ref{COR:normal} to the stability over the
elementary extensions.
\begin{theorem}\label{THM:stabmonomial}
Let $(F, \delta)$ be  a regular differential
field with $\delta(x)=1$ and $C_F$ being algebraically closed.
Let $(F(t), \delta)$ be a monomial extension of $(F, \delta)$ with $C_{F(t)}=C_F$.
If $f\in F(t)$ is stable over the elementary extensions of $F(t)$, then there exists $g\in F(t)$ such that $f - \delta(g)$ is reduced.
\end{theorem}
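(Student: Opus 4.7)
The plan is to reduce, via a Hermite-type step, to the case where $f$ has only simple poles at normal factors, and then to kill each such pole by a residue analysis combining the stability hypothesis with Liouville's theorem, in the spirit of Corollary~\ref{COR:normal}.

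First I would extend Lemma~\ref{LEM:stable}~(ii) to stability over elementary extensions. Since every elementary extension of $F(t)$ is again regular (it still contains $x$ with $\delta(x)=1$), the same proof yields: $f$ is stable over the elementary extensions of $F(t)$ if and only if $x^i f$ admits an elementary integral over $F(t)$ for every $i\in\bN$. I would then apply the Hermite reduction for monomial extensions (cf.~\cite[Chapter~5]{BronsteinBook}) to find $g_0\in F(t)$ such that $f_1:=f-\delta(g_0)$ has at most a simple pole at every irreducible normal factor. An inductive argument on the largest pole order, one Hermite step at a time, shows that the property ``$x^i(\cdot)$ is elementary integrable for every $i\in\bN$'' is inherited by $f_1$ from $f$. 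Then Liouville's theorem applied to $x^i f_1 \in F(t)$ gives $v_i, h_{i,k}\in F(t)$ and $c_{i,k}\in C_F$ with
\[
 x^i f_1 \;=\; \delta(v_i) + \sum_{k} c_{i,k}\,\frac{\delta(h_{i,k})}{h_{i,k}}.
\]

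Since $f_1$ and the logarithmic derivative part have only simple normal poles, each $v_i$ must have no pole at any normal factor $p$ (otherwise $\delta(v_i)$ would contribute a pole of order $\ge 2$ at $p$). Matching the $1/p$-coefficient at an irreducible normal $p$, via Proposition~\ref{PROP:val}~(iii), then yields
\[
 x^i\,\rho_p(f_1) \;\equiv\; c_{i,p}\,\delta(p) \pmod{p},
\]
where $\rho_p(f_1)\in F[t]/(p)$ is the residue of $f_1$ at $p$ and $c_{i,p}:=\sum_k c_{i,k}\,\nu_p(h_{i,k})\in C_F$. Specializing to $i=0,1$ and using that $p\in F[t]\setminus F$ has positive $t$-degree (so $F$ embeds into $F[t]/(p)$), we deduce $xc_{0,p}=c_{1,p}$ as elements of $F$. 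Applying $\delta$ to both sides and invoking $\delta(x)=1$ together with $c_{0,p},c_{1,p}\in C_F$ forces $c_{0,p}=0$, hence $\rho_p(f_1)=0$. Thus $f_1$ has no simple normal pole, is reduced, and $g=g_0$ works.

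The main obstacle I anticipate is the inheritance step: showing that the Hermite reducer $g_0$ itself satisfies ``$x^j g_0$ is elementary integrable for every $j$'', so the stability hypothesis persists through Hermite reduction. I would approach this by induction on the pole order, using that each Hermite step introduces a $g_0$-summand of the form $A/p^{m-1}$ whose iterates $x^j A/p^{m-1}$ must be elementary integrable by a Rothstein--Trager style analysis at $p$, but the details require care.
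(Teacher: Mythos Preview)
Your route is genuinely different from the paper's, and the endgame residue argument (forcing $xc_{0,p}=c_{1,p}$ in $F$ and differentiating) is correct and elegant. But the obstacle you flag—the inheritance of the stability property through Hermite reduction—is not a technicality; it is a real gap that your sketch does not close. Concretely, after one Hermite step $f=\delta(B/p^{m-1})+f'$, the elementary integrability of $x^jf'$ reduces via integration by parts to that of $x^{j-1}B/p^{m-1}$, and there is no reason for an arbitrary $B/p^{m-1}\in F(t)$ (or its eventual simple-pole remainder $D/p$) to be elementary integrable over $F(t)$. Your proposed induction on pole order does not manufacture this integrability from the hypothesis on $f$ alone.

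The paper avoids this issue entirely by going in the opposite direction: rather than pushing stability \emph{down} through Hermite, it goes \emph{up} to the antiderivative. From $f=\delta(g)+\sum_i c_i\,\delta(p_i)/p_i+h$ one has $\int f = g+\sum_i c_i\log(p_i)+\int h$, and stability says this antiderivative is again elementary integrable. The key lemma (Lemma~\ref{LEM:logmonomial}) shows directly that $\log(p)$ is \emph{not} elementary integrable over $F(t)$ for any normal $p\in F[t]\setminus F$, so the $c_i$ must vanish. Interestingly, the proof of that lemma uses exactly your mechanism: it derives $1=\delta(a)$ and $b=-a\,\delta(p)/p$ with $a=x+\lambda$, and the contradiction comes from $a$ not being constant—morally the same as your ``$xc_{0,p}=c_{1,p}$ forces $c_{0,p}=0$'' step, but packaged so that no inheritance through Hermite is needed.
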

\begin{proof}
Suppose that there exists some irreducible normal polynomial $p\in F[t]$ such that $\nu_p(f)<0$. Then the denominator of $f$ contains normal factors. Since $f$ is stable over the elementary extensions of $F(t)$, it is also elementary integrable over $F(t)$. By the Hermite reduction and Theorem 5.6.1 in~\cite{BronsteinBook}, there exist $g, h\in F(t)$, distinct irreducible normal
polynomials $p_1, \ldots, p_n \in F[t]$ and $c_1, \ldots, c_n\in C_F$ such that
$g$ is proper with its denominator having only normal factors, $h$ is reduced and
\[f = \delta(g) + \sum_{i=1}^n c_i \frac{\delta(p_i)}{p_i}  + h.\]
It remains to show that all of the $c_i$'s are zero. Otherwise, the $\log(p_i)$'s
will be in the anti-derivatives of $f$ which are again elementary integrable
by the stability assumption on $f$. This is a contradiction with Lemma~\ref{LEM:logmonomial}.
\end{proof}

We now consider the stability problem on elementary functions.
\begin{problem}\label{PROB:elemstab}
For a given elementary function $f$ over~$\bC(x)$, decide whether $f$ is
stable or not.
\end{problem}
By Lemma~\ref{LEM:stable}, an elementary function $f$ is stable
if and only if for all $i\in \bN$, $x^i f$ is elementary integrable over $\bC(x)$.
So the stability problem can be viewed as a parametrized version of the integration
problem. For this moment, we are far from having a complete solution to the above problem.
In the rest of this section, we will study the problem on three special families of elementary functions.

\subsection{Stable rational functions}\label{SUBSECT:rat}
It is well-known that all rational functions
are elementary integrable over $(\bC(x), d/dx)$ since their anti-derivatives
are linear combinations of rational functions and logarithmic functions
over $\bC(x)$.
\begin{theorem}\label{THM:rational}
Let $f$ be a rational function in $\bC(x)$. Then
\begin{itemize}
\item[$(i)$] $f$ is a stable elementary function if $\delta=d/dx$;
\item[$(ii)$] If $\delta=d/dx$, then $f$ is stable in $(\bC(x), \delta)$ if and only if $f$ is a polynomial in $\bC[x]$;
\item[$(iii)$] If $\delta=x\cdot d/dx$, then $f$ is stable in $(\bC(x), \delta)$ if and only if $f$ is a  Laurent polynomial in $\bC[x, x^{-1}]$ that is not a nonzero constant.
\end{itemize}
\end{theorem}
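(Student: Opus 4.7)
I would treat the three parts separately, using Lemma~\ref{LEM:stable}, Corollary~\ref{COR:normal}, and Example~\ref{EXAM:special} as the main tools. Part~$(i)$ should follow essentially immediately from an elementary-extension analog of Lemma~\ref{LEM:stable}$(ii)$, since the operator-algebra manipulations in the proof of that lemma take place inside $\bC(x)\langle D\rangle$ and apply regardless of where the antiderivatives live. Thus $f\in\bC(x)$ is a stable elementary function iff every $x^if$ ($i\in\bN$) is elementary integrable, which is the classical fact that rational functions are elementary integrable via partial fractions.

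For part~$(ii)$ the field $(\bC(x),d/dx)$ is regular. The ``only if'' direction is immediate from Corollary~\ref{COR:normal}: every irreducible factor of the denominator of $f$ must be special, and by Example~\ref{EXAM:special} these are precisely the elements of $\bC$, so $f\in\bC[x]$. For ``if'' I would produce $g_n\in\bC[x]$ with $\delta^n(g_n)=f$ by integrating $f=\sum c_ix^i$ term by term $n$ times.

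For part~$(iii)$ we have $\delta=x\cdot d/dx$, and the special polynomials are exactly the monomials $x^n$ by Example~\ref{EXAM:special}. Corollary~\ref{COR:normal} therefore forces the denominator of any stable $f$ to be a power of $x$, placing $f\in\bC[x,x^{-1}]$. For the converse and to handle the nonzero-constant exclusion, I would use $C_{\bC(x)}$-linearity of $\delta$ to split $f=a_0+\sum_{i\neq 0}a_ix^i$. The nonconstant part is manifestly stable via $g_n:=\sum_{i\neq 0}(a_i/i^n)x^i$, since $\delta(x^i)=ix^i$ gives $\delta^n(g_n)=\sum_{i\neq 0}a_ix^i$; stability of $f$ therefore reduces to stability of the scalar $a_0\in\bC$, and a nonzero scalar is excluded by Remark~\ref{REM:regular} (the function $a_0/x$ has no antiderivative in $\bC(x)$).

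The main obstacle I anticipate is the constant-term subtlety in $(iii)$: the literal condition ``$f$ is a Laurent polynomial that is not a nonzero constant'' is strictly weaker than what the argument actually establishes, namely $f\in\bigoplus_{i\neq 0}\bC\cdot x^i$ (so, for example, $f=1+x$ is a Laurent polynomial and not a constant, yet the analysis above shows it is not stable because $a_0=1\neq 0$). I would either tighten the stated hypothesis to ``Laurent polynomial with zero constant term'' or insert an explicit sublemma ruling out nonzero constant terms, so that the equivalence is stated in terms of the span $\bigoplus_{i\neq 0}\bC\cdot x^i$ that the proof actually picks out.
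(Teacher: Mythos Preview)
Your approach to parts~$(i)$ and~$(ii)$ is essentially identical to the paper's: both use Lemma~\ref{LEM:stable} to reduce~$(i)$ to elementary integrability of the $x^if$, and both obtain~$(ii)$ from Corollary~\ref{COR:normal} together with the identification of the special polynomials in Example~\ref{EXAM:special}, with the converse coming from regularity (the paper cites Proposition~\ref{PROP:regular}, you integrate term by term; these are interchangeable).

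For part~$(iii)$ you have actually put your finger on a genuine defect in the statement, and your analysis is sharper than the paper's own proof. The paper argues exactly as you do: Corollary~\ref{COR:normal} forces $f\in\bC[x,x^{-1}]$; a nonzero constant is not integrable; and each monomial $x^i$ with $i\neq 0$ is stable via $x^i=\delta^m(x^i/i^m)$. But this only shows $\text{Stab}(\delta,\bC(x))=\bigoplus_{i\neq 0}\bC\,x^i$, which is precisely what you isolate. Your counterexample $f=1+x$ is valid: it is a Laurent polynomial and not a nonzero constant, yet $\delta(g)=1+x$ would require $g'=1/x+1$, so $g=\log(x)+x\notin\bC(x)$, and $f$ is not even once integrable. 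The paper's proof does not address this case and the stated equivalence is therefore incorrect as written. Your proposed fix---replacing ``not a nonzero constant'' by ``with zero constant term''---is the right repair, and your argument already proves the corrected version.
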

\begin{proof}
$(i)$ By Lemma~\ref{LEM:stable}, $f$ is stable in the elementary extensions of $\bC(x)$ if and only if for all $i\in \bN$, $x^if$ is elementary integrable. The latter is always true for rational functions.

$(ii)$ Since the differential field $(\bC(x), d/dx)$ is regular, all polynomials are stable by the second assertion in Proposition~\ref{PROP:stable}. Suppose that $f\in \bC(x)$ is stable. By Corollary~\ref{COR:normal}, the denominator of $f$ has
only special-polynomial factors. Since in the monomial extension $(\bC(x), d/dx)$
of $(\bC, d/dx)$, the only possible special polynomials are constants in $\bC$. So
$f$ must be a polynomial in $\bC[x]$.

$(iii)$ By Theorem 5.1.2 in~\cite{BronsteinBook}, the only possible special polynomials in the monomial extension $(\bC(x), x\cdot d/dx)$ of $(\bC, x\cdot d/dx)$ are of the form $x^i$ with $i\in \bN$. If $f$ is stable in $(\bC(x), x\cdot d/dx)$, then $f$ must be a Laurent polynomial in $x$ by Corollary~\ref{COR:normal}.
Any nonzero constant $c\in \bC$ is not integrable (with respect to $x\cdot d/x$) in $\bC(x)$ since $c/x \neq d/dx(g)$ for any $g\in \bC(x)$. So $f$ is also not a nonzero constant. The necessity follows from the fact that $x^i$ is stable in $(\bC(x), x\cdot d/dx)$ for any nonzero $i\in \bZ$ since
$x^i = (x\cdot {d}/{dx})^m ({x^i}/{i^m})$ for all $m\in \bN$.
\end{proof}
\begin{remark}
Another way to show the second assertion in the above theorem is using Theorem~\ref{THM:stable}.  Suppose that $f\in \bC(x)$ is stable and it is not a polynomial. Then $f = P/Q$ for some $P, Q\in \bC[x]$ with $Q\notin \bC$ and $\gcd(P, Q) =1$. Since $Q$ is not a constant, $Q$ has at least one root in $\bC$, say $\alpha$. Let $R = Q/(x-\alpha) \in \bC[x]$. By Theorem~\ref{THM:stable}, the product $Rf = P/(x-\alpha)$ is also stable in $\bC(x)$, which leads to a contradiction since $Rf$ is not integrable in $\bC(x)$.
\end{remark}

\subsection{Stable logarithmic functions}\label{SUBSECT:log}
In calculus, we have the following  formula~\cite[p.\ 238]{GRtable2007}
\begin{equation}\label{EQ:GR}
\int x^n \log(x)^m \, dx =  \frac{x^{n+1}}{m+1} \sum_{k=0}^n (-1)^k
\frac{(m+1)!}{(m-k)!}  \frac{(\log(x))^{m-k}}{(n+1)^{k+1}}
\end{equation}
for any $m\in \bN$ and $n\in \bZ$ with $n\neq -1$. This implies that $\log(x)^m$ is a stable elementary functions. One of classical results on the integration of
logarithmic functions is as follows.

\begin{theorem}[Liouville-Hardy theorem]\label{THM:LH}
Let $f\in \bC(x)$, then $f\cdot \log(x)$ is elementary integrable over $\bC(x)$
if and only if \[f = \frac{c}{x} + \frac{dg}{dx}\] for some $c\in \bC$ and $g\in \bC(x)$.
\end{theorem}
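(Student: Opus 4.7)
The plan is to apply Liouville's theorem in the monomial extension $\bC(x)(t)$ with $t = \log(x)$ and $\delta(t) = 1/x$, then exploit the absence of special polynomials of positive $t$-degree in this primitive extension to strip the antiderivative down to a polynomial in $t$ of degree at most two.

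Sufficiency is a direct integration by parts: if $f = c/x + dg/dx$ with $c \in \bC$, $g \in \bC(x)$, then
\[ \int f\log(x)\, dx = \tfrac{c}{2}\log(x)^2 + g\log(x) - \int \tfrac{g}{x}\, dx, \]
and the last integrand is rational, hence elementary integrable.

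For necessity, apply Theorem~\ref{THM:liouville} to $ft$ in $\bC(x)(t)$ to obtain
\[ ft = \delta(u) + \sum_{i=1}^n c_i \frac{\delta(h_i)}{h_i}, \]
with $u \in \bC(x)(t)$, $h_i \in \bC(x)(t)^*$, $c_i \in \bC$. The first key step is to show that $u$ may be chosen in $\bC(x)[t]$ and each $h_i$ in $\bC(x)^*$. Since $t$ is a primitive monomial over $\bC(x)$, the structure theory of primitive extensions (Bronstein, Chapter 5) gives that no irreducible polynomial in $\bC(x)[t]$ of positive $t$-degree is special, so every such polynomial is normal. A $t$-pole of $u$ at a normal irreducible $p$ would then produce a pole of $\delta(u)$ of strictly higher order than the at-most-simple pole a term $\delta(h_i)/h_i$ can contribute at $p$, which cannot be matched by $ft$ (having no $t$-poles). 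A Hermite-style reduction analogous to Theorem~\ref{THM:stabmonomial} removes the proper rational-in-$t$ part of $u$ and forces each nontrivial polynomial $h_i$ out, leaving $u \in \bC(x)[t]$ and $\sum c_i \delta(h_i)/h_i \in \bC(x)$.

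Writing $u = \sum_{j=0}^m a_j t^j$ with $a_j \in \bC(x)$ and $a_m \neq 0$, we have
\[ \delta(u) = \delta(a_m)\, t^m + \sum_{j=0}^{m-1}\Bigl(\delta(a_j) + (j+1)\,\frac{a_{j+1}}{x}\Bigr)\, t^j. \]
Comparing with $ft$: for $m \geq 2$ the coefficient of $t^m$ forces $\delta(a_m) = 0$, so $a_m \in \bC$; for $m \geq 3$ the coefficient of $t^{m-1}$ yields $\delta(a_{m-1}) = -m\,a_m/x$, impossible since $1/x$ has no antiderivative in $\bC(x)$. Hence $m \leq 2$; writing $u = (c/2)\,t^2 + g\,t + a_0$ with $c \in \bC$, $g \in \bC(x)$, the coefficient of $t^1$ gives $f = c/x + dg/dx$, as required. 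The main obstacle is the preliminary reduction to $u \in \bC(x)[t]$ and to logarithmic data constant in $t$: this requires careful pole-order bookkeeping for normal irreducible polynomials and a full use of the fact that $\log(x)$ is primitive with no higher-degree special polynomials over its base field.
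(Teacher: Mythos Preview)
Your argument is correct and complete. Note, however, that the paper does not actually supply a proof of this theorem: its proof block consists solely of the citation ``See~\cite[p.\ 60]{Hardy1916} or~\cite{MZ1994}.'' So there is no in-paper argument to compare against.

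What you have written is essentially the standard proof one finds in those references: apply Liouville's theorem in the primitive extension $\bC(x)(t)$ with $t=\log(x)$, use normality of every positive-degree irreducible (no nontrivial special polynomials for a logarithmic monomial) together with the order drop $\nu_p(\delta(u))=\nu_p(u)-1$ to rule out $t$-poles in $u$ and to eliminate any $h_i$ with a positive-degree factor in $t$, and then compare $t$-coefficients. The degree argument ($\delta(a_m)=0$ forces $a_m\in\bC$; then $\delta(a_{m-1})=-m a_m/x$ is impossible in $\bC(x)$ unless $m\le 2$) is exactly right, and your identification $c=2a_2$, $g=a_1$ recovers the claimed form. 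The sufficiency direction via integration by parts is also fine.

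One small presentational point: when you ``compare with $ft$'' you are really comparing with $ft-\sum c_i\,\delta(h_i)/h_i$, but since at that stage the logarithmic sum already lies in $\bC(x)$ it affects only the $t^0$-coefficient, so your coefficient-matching for $t^j$ with $j\ge 1$ is unaffected. You might state this explicitly.
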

\begin{proof}
See~\cite[p.\ 60]{Hardy1916} or~\cite{MZ1994}.
\end{proof}

Let $(K,\delta)$ be a differential field and
$t$ be a logarithmic monomial over $K$, i.e., $t$
is transcendental over $K$, $\delta(t) = \delta(a)/a$ for some $a\in K\setminus\{0\}$ and $C_{K(t)} = C_K$. Symbolically,
we write $t = \log(a)$. In $K[t]$, all special polynomials
are constants by~\cite[Theorem 5.1.1]{BronsteinBook}. So there is
no proper reduced rational function in $K(t)$. For a given $f\in K(t)$, the integration procedure
in~\cite[Chap.\ 5]{BronsteinBook} can be summarized as follows.
First, applying the Hermite reduction to $f$ yields the decomposition
\[f = \delta(g) + \frac{a}{b},\]
where $g\in K(t)$ and $a, b\in K[t]$ with $b$ being normal.
So this step reduces the integrability problem to that of simple rational
functions in $K(t)$. By the residue criterion~\cite[Theorem 5.6.1]{BronsteinBook}, there exist $c_1, \ldots, c_n \in C_K$ and normal
polynomials $b_1, \ldots, b_n\in K[t]$ such that
$f - \sum_{i=1}^n c_i \delta(b_i)/b_i \in K[t]$ if $f$ is elementary integrable over $K(t)$.  If we detect
that $f$ is not elementary integrable at this step, we can stop.
Otherwise, it remains to consider the integrability problem on polynomials in $K[t]$. For a polynomial $p\in K[t]$, there exists $q\in K[t]$ such that
$p - \delta(q) \in K$ if $p$ is elementary integrable over $K(t)$ by ~\cite[Theorem 5.8.1]{BronsteinBook}. So we either detect the non-integrability or reduce the problem from $K(t)$ to $K$. Then we proceed recursively.

In the rest of this subsection, we specialize to the case in which $K= \bC(x)$ and $\delta = d/dx$ and present a stable version of Theorem~\ref{THM:LH}.

\begin{theorem}\label{THM:ratlog}
Let $T = f\cdot \log(x)$ with $f \in \bC(x)$. Then $T$ is stable over the elementary extensions of $\bC(x)$ if and only if $f\in \bC[x, x^{-1}]$.
\end{theorem}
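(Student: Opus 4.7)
The plan is to reduce the stability of $T = f\log(x)$ to a sequence of integrability conditions in $\bC(x)$, using Lemma~\ref{LEM:stable}, and then to translate each of those conditions into a residue constraint via the Liouville--Hardy theorem (Theorem~\ref{THM:LH}). The regular field $(\bC(x), d/dx)$ with $\delta(x)=1$ satisfies the hypothesis of Lemma~\ref{LEM:stable}(ii), read over elementary extensions as remarked after Problem~\ref{PROB:elemstab}; hence $T$ is stable over the elementary extensions of $\bC(x)$ if and only if, for every $i \in \bN$, the function $x^i T = (x^i f)\log(x)$ is elementary integrable over $\bC(x)$. Applying Theorem~\ref{THM:LH} for each $i$, this is equivalent to the existence of $c_i \in \bC$ and $g_i \in \bC(x)$ with $x^i f = c_i/x + g_i'$, i.e., $x^i f - c_i/x$ is a rational derivative. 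Since a rational function in $\bC(x)$ has a rational antiderivative iff all its residues vanish, and $c_i$ can absorb the residue at $0$, the condition collapses to: $\operatorname{Res}_\alpha(x^i f) = 0$ for every $\alpha \in \bC \setminus \{0\}$ and every $i \in \bN$.

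For the forward direction, I proceed by contrapositive. Assume $f \notin \bC[x, x^{-1}]$; then $f$ has a pole at some $\alpha \in \bC \setminus \{0\}$ of some order $m \geq 1$, with partial-fraction contribution $\sum_{j=1}^m a_j (x-\alpha)^{-j}$ and $a_m \neq 0$. A Taylor expansion of $x^i$ around $\alpha$ gives $\operatorname{Res}_\alpha(x^i/(x-\alpha)^j) = \binom{i}{j-1}\alpha^{i-j+1}$, so
\[
\operatorname{Res}_\alpha(x^i f) \;=\; \alpha^{i}\sum_{j=1}^m a_j\, \alpha^{1-j}\binom{i}{j-1}.
\]
Requiring this to vanish for every $i \in \bN$ yields a linear system in the coefficients $a_1,\dots,a_m$. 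Since $\binom{i}{0}, \binom{i}{1}, \ldots, \binom{i}{m-1}$ are polynomials in $i$ of pairwise distinct degrees and therefore $\bC$-linearly independent as functions of $i$, each $a_j \alpha^{1-j}$ must vanish, forcing every $a_j = 0$ and contradicting $a_m \neq 0$. Thus $f \in \bC[x, x^{-1}]$.

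For the converse, suppose $f \in \bC[x, x^{-1}]$, and let $R = \bC[x, x^{-1}, \log(x)]$. The claim is that $R$ is closed under antidifferentiation with respect to $x$; by linearity it suffices to antidifferentiate a single monomial $x^n\log(x)^m$. For $n \neq -1$, the explicit formula~\eqref{EQ:GR} produces an antiderivative in $R$ (note $x^{n+1} \in \bC[x, x^{-1}]$); for $n = -1$, one has $\int \log(x)^m/x\, dx = \log(x)^{m+1}/(m+1) \in R$. Since $T = f\log(x) \in R$, iterating the closure property produces $i$-fold antiderivatives of $T$ inside $R$ for every $i \in \bN$, which exhibits $T$ as stable. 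The main obstacle is in the forward direction: neither Corollary~\ref{COR:normal} nor Theorem~\ref{THM:stabmonomial} applied to the logarithmic monomial extension $\bC(x)(\log(x))$ directly rules out non-Laurent poles of $f$, because such poles live in the base field $\bC(x)$ rather than in $t = \log(x)$. It is precisely the parameter $i$ coming from stability, combined with the linear independence of the binomial polynomials $\{\binom{i}{j}\}$, that turns a single integrability condition into enough conditions to pin down the partial-fraction coefficients.
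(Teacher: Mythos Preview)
Your argument is correct, but the forward direction follows a genuinely different route from the paper's. The paper iterates the Liouville--Hardy decomposition itself: from the stability of $f\log(x)$ it writes $f = c_1/x + \delta(g_1)$, then uses the identity $f\log(x) = \delta\bigl(\tfrac{c_1}{2}\log(x)^2 + g_1\log(x)\bigr) - g_1/x$ together with the stability of $\log(x)^2$ and of rational functions to infer that $g_1\log(x)$ is again stable; repeating this produces a sequence $g_0=f,\, g_1,\, g_2,\,\ldots$ with $g_i = c_{i+1}/x + \delta(g_{i+1})$, and any nonzero pole of $f$ away from $0$ would persist in the $g_i$'s with strictly decreasing order, giving a contradiction after finitely many steps. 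You instead invoke Lemma~\ref{LEM:stable}(ii) up front to linearize the problem into the family of conditions ``$x^i f\log(x)$ is elementary integrable for all $i$'', apply Liouville--Hardy once per $i$, and then kill the partial-fraction coefficients at a nonzero pole $\alpha$ by the linear independence of the binomial polynomials $\binom{i}{0},\ldots,\binom{i}{m-1}$ as functions of $i$. Your approach is more direct and makes the role of the parameter $i$ completely explicit, at the cost of a small residue computation; the paper's approach is more structural and avoids any explicit residue calculus, but has to argue that stability propagates from $f\log(x)$ to $g_1\log(x)$. For the converse you show that $\bC[x,x^{-1},\log(x)]$ is closed under antidifferentiation, which is slightly stronger than what the paper does (the paper checks, via Lemma~\ref{LEM:stable} and formula~\eqref{EQ:GR}, only that each $x^m\log(x)$ is stable) and gives an explicit home for all the iterated antiderivatives.
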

\begin{proof}  To show the necessity, we suppose that $T$ is stable over the elementary extension of $\bC(x)$.
By the Liouville-Hardy theorem, we have $f = c_1/x + \delta(g_1)$
for some $c_1\in \bC$ and $g_1\in \bC(x)$. Moreover,
\[ f \log(x) = \delta\left(\frac{c_1}{2} (\log(x))^2 + g_1 \log(x)\right) - \frac{g_1}{x}.\]
Since $(\log(x))^2$ and all rational functions are stable, we have $g_1\log(x)$
is also stable. Applying the Liouville-Hardy theorem to $g_1\log(x)$ yields
$g_1 = c_2/x + \delta(g_2)$ for some $c_2\in \bC$ and $g_2\in \bC(x)$. Iterating this process, we obtain two sequences $\{c_i\}_{i\in \bN}$ in $\bC$ and $\{g_i\}_{i\in \bN}$ in $\bC(x)$ with $g_0 =  f$ and
$g_i = c_{i+1}/x + \delta(g_{i+1})$ for all $i\in \bN$. If the denominator of $f$
has a root other than zero, so do the $g_i$'s. Choosing sufficiently large $i$ yields a contradiction by looking at the order at this root. So $f$ is a Laurent polynomial in $x$. For the necessity, we only need to show that $x^m \log(x)$
is stable for any $m\in \bZ$. By the formula~\eqref{EQ:GR}, it suffices to show that
$\log(x)/x$ is elementary integrable. This is true since $\log(x)/x = \delta(\log(x)^2/2)$.
\end{proof}

\subsection{Stable exponential functions}\label{SUBSECT:exp}
Throughout this part, let $\delta$ be the usual derivation $d/dx$
on $\bC(x)$ and its extensions. The non-elementary integrability of
$\exp(x^2)$ is derived from the
following theorem by Liouville in~\cite{Liouville1835} (see~\cite[p.\ 971]{Rosenlicht1972} for its proof).
\begin{theorem}\label{THM:fexpg}
Let $f, g \in \bC(x)$ with $g \notin \bC$ and $t = f\cdot \exp(g)$. Then $t$ is elementary integrable over $\bC(x)$ if and only if there exists $h\in \bC(x)$
such that $f = \delta(h) + h \cdot \delta(g)$.
\end{theorem}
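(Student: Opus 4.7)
The \emph{sufficiency} is immediate: if $f = \delta(h) + h\cdot\delta(g)$ for some $h \in \bC(x)$, then
$\delta(h\exp(g)) = (\delta(h) + h\,\delta(g))\exp(g) = f\exp(g) = t$,
so $t$ is elementary integrable with an anti-derivative $h\exp(g)$ living already in the elementary extension $\bC(x)(\exp(g))$.

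For \emph{necessity}, set $\theta = \exp(g)$, so that $\delta(\theta) = \delta(g)\,\theta$. Since $g \notin \bC$ we have $\delta(g) \neq 0$, and then $\theta$ is transcendental over $\bC(x)$ with $C_{\bC(x)(\theta)} = \bC$ (a classical consequence of the structure theory of exponential monomials; see e.g.~\cite[Theorem 5.1.1]{BronsteinBook}). Thus $(\bC(x)(\theta), \delta)$ is an exponential monomial extension of $(\bC(x), \delta)$ whose constant field is algebraically closed. Applying Liouville's Theorem~\ref{THM:liouville} to $f\theta \in \bC(x)(\theta)$, the hypothesized elementary integrability yields
\begin{equation*}
f\theta \;=\; \delta(u) \;+\; \sum_{i=1}^n c_i\, \frac{\delta(v_i)}{v_i}
\end{equation*}
for some $u \in \bC(x)(\theta)$, nonzero $v_i \in \bC(x)(\theta)$, and $c_i \in \bC$.

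The plan is to read off $h$ by matching the $\theta^1$-coefficient on both sides. In an exponential monomial extension, the only special irreducible polynomial in $\bC(x)[\theta]$ (up to units in $\bC(x)$) is $\theta$ itself; every other irreducible is normal. Using partial fractions we decompose uniquely
\begin{equation*}
u \;=\; \sum_{k=-m}^{M} u_k \theta^k \;+\; \rho(\theta), \qquad u_k \in \bC(x),
\end{equation*}
where $\rho$ is proper in $\theta$ with denominator coprime to $\theta$ (and hence involving only normal factors). From $\delta(\theta^k) = k\,\delta(g)\theta^k$ the derivative of the Laurent part is $\sum_k (\delta(u_k) + k u_k \delta(g))\theta^k$, while a short degree count shows that $\delta(\rho)$ remains proper with denominator coprime to $\theta$, so it contributes nothing to any pure monomial $\theta^k$. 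Factoring each $v_i = \alpha_i \theta^{a_i} w_i$ with $\alpha_i \in \bC(x) \setminus \{0\}$ and $\gcd(w_i, \theta) = 1$, the logarithmic derivative splits as
\begin{equation*}
\frac{\delta(v_i)}{v_i} \;=\; \frac{\delta(\alpha_i)}{\alpha_i} + a_i\,\delta(g) + \frac{\delta(w_i)}{w_i},
\end{equation*}
where the first two summands lie in $\bC(x)$ and the third has only normal-polynomial poles; none of these contribute to the $\theta^1$-Laurent component.

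Equating $\theta^1$-coefficients in the unique Laurent-plus-proper decomposition of the two sides of the Liouville identity then yields $f = \delta(u_1) + u_1\,\delta(g)$, so $h := u_1 \in \bC(x)$ is the required element. The hardest part is the structural bookkeeping justifying the previous paragraph: one must rigorously verify that $\delta(\rho)$ and each normal piece $\delta(w_i)/w_i$ cannot contribute to the Laurent polynomial part of the right-hand side at $\theta^1$. This reduces to the order estimates of Proposition~\ref{PROP:val} combined with the classification of special polynomials in an exponential monomial extension, with extra care taken because $\delta(w_i)/w_i$ does have a nontrivial $\theta^0$-piece (a leading-coefficient ratio), but never a $\theta^1$-piece.
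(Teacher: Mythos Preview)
Your argument is correct and is precisely the classical proof; note that the paper does not supply its own proof of Theorem~\ref{THM:fexpg} but merely cites Rosenlicht~\cite[p.~971]{Rosenlicht1972}, and your Liouville-theorem-plus-$\theta$-coefficient-matching argument is exactly the one given there. One minor point: the reference you give for the transcendence of $\theta=\exp(g)$ and the triviality of new constants, \cite[Theorem~5.1.1]{BronsteinBook}, concerns primitive (logarithmic) monomials; the exponential case is Theorem~5.1.2 in the same chapter, so you may want to adjust the citation.
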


A rational function $f = a/b$ with $a, b\in \bC[x]$ and $\gcd(a, b)=1$ is said to be \emph{differential-reduced} (with respect to $\delta$) if
\[\gcd(b, a-i\delta(b)) = 1 \quad \text{for all $i\in \bZ$}.\]
We recall some  basic properties of differential-reduced rational functions from~\cite{GeddesLeLi2004, BCCLX2013}.

\begin{prop}\label{PROP:dred}
Let $f = a/b\in \bC(x)$ be such that $a, b\in \bC[x]$ and $\gcd(a, b)=1$. Then
\begin{itemize}
\item[$(i)$]  $\delta(f)$ is always differential-reduced;
\item[$(ii)$] $f + m\delta(b)/b$ is differential-reduced for any $m\in \bZ$
if $f$ is differential-reduced;
\item[$(iii)$] if $f$ is differential-reduced and $b\delta(g) + a g \in \bC[x]$, then $g \in \bC[x]$.
\end{itemize}
\end{prop}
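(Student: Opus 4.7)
The plan is to handle (i) and (ii) by direct manipulation at the level of irreducible factors of $b$, and to prove (iii) by combining a $q$-adic valuation argument via Proposition~\ref{PROP:val} with a residue-type calculation that translates the polynomial hypothesis on $b\delta(g)+ag$ into exactly the obstruction forbidden by differential-reducedness.

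For (i), I would factor $b=\prod_j p_j^{k_j}$ into distinct monic irreducibles in $\bC[x]$ and expand $\delta(f)=(\delta(a)b-a\delta(b))/b^2$. Pulling out the common factor $\prod_j p_j^{k_j-1}$ leaves a candidate denominator $\prod_j p_j^{k_j+1}$; the essential check is that the simplified numerator is coprime to every $p_j$, which follows by reducing modulo $p_j$, since the surviving term is a nonzero scalar multiple of $a\cdot k_j\cdot \delta(p_j)\cdot \prod_{i\ne j}p_i^{k_i}$, each factor of which is coprime to $p_j$ (using $\gcd(a,b)=1$, characteristic zero, and $\gcd(p_j,\delta(p_j))=1$ for irreducible $p_j$). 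Hence in lowest terms $\delta(f)=A/B$ with each $p_j$ appearing in $B$ with multiplicity $\geq 2$, so $p_j\mid \delta(B)$, and the differential-reduced condition $A\not\equiv i\delta(B)\pmod{p_j}$ collapses to $A\not\equiv 0\pmod{p_j}$, which is immediate from $\gcd(A,B)=1$. For (ii), a one-line manipulation gives $f+m\delta(b)/b=(a+m\delta(b))/b$; the differential-reduced hypothesis for $f$ with $i=-m$ shows this fraction is already in lowest terms, and its own differential-reduced condition $\gcd((a+m\delta(b))-j\delta(b),\,b)=1$ for every $j\in\bZ$ reduces, via the substitution $i=j-m$, to the hypothesis on $f$.

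For (iii), the substantive part, I would argue by contradiction: suppose $g=P/Q$ with $\gcd(P,Q)=1$ and $Q\notin\bC$, and pick an irreducible $q\mid Q$ of multiplicity $k\geq 1$. Then $\nu_q(g)=-k$ and $\nu_q(\delta(g))=-k-1$ by Proposition~\ref{PROP:val}(iii), so
\[
\nu_q(b\delta(g))=\nu_q(b)-k-1, \qquad \nu_q(ag)=\nu_q(a)-k.
\]
Since $b\delta(g)+ag\in\bC[x]$ has non-negative $q$-valuation, if these two valuations differ then Proposition~\ref{PROP:val}(ii) gives $\nu_q(b\delta(g)+ag)=\min\{\nu_q(b)-k-1,\nu_q(a)-k\}\geq 0$, forcing $\nu_q(a)\geq k\geq 1$ and $\nu_q(b)\geq k+1\geq 2$ and contradicting $\gcd(a,b)=1$. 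Hence $\nu_q(b)=\nu_q(a)+1$, and coprimality of $a$ and $b$ together with $a,b\in\bC[x]$ then forces $\nu_q(a)=0$ and $\nu_q(b)=1$. Writing $b=qb_1$ and $Q=q^kQ_1$ with $b_1,Q_1$ coprime to $q$, I would expand $b\delta(g)+ag$ over the common denominator $q^kQ_1^2$; the numerator reduces modulo $q$ to $PQ_1(a-kb_1\delta(q))$, so coprimality of $P$ and $Q_1$ with $q$ forces $a\equiv kb_1\delta(q)\pmod q$. Since $\delta(b)\equiv b_1\delta(q)\pmod q$, this is precisely the violation of the differential-reduced condition for $f$ at $i=k$, giving the desired contradiction.

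The main obstacle is the local residue-type computation in (iii): one must carry out the expansion carefully enough to recognize that the vanishing of the leading pole coefficient of $b\delta(g)+ag$ modulo $q$ is controlled by exactly the integer-residue quantity ruled out by differential-reducedness. Once this identification is made, the rest of (iii) is valuation bookkeeping via Proposition~\ref{PROP:val}.
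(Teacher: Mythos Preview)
Your argument is correct in all three parts. For (i) and (ii), your direct local computation at each irreducible factor $p_j$ is sound: the key observation that every irreducible factor of the reduced denominator of $\delta(f)$ occurs with multiplicity at least two makes the differential-reduced condition automatic, and the shift argument in (ii) is clean. For (iii), your valuation bookkeeping via Proposition~\ref{PROP:val} correctly forces $\nu_q(a)=0$ and $\nu_q(b)=1$, and the residue computation modulo $q$ delivers exactly $q\mid a-k\,\delta(b)$, contradicting differential-reducedness at $i=k$.

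The paper itself does not prove the proposition directly: it invokes the characterization from \cite{GeddesLeLi2004} that $f$ is differential-reduced if and only if none of its residues at \emph{simple} poles is an integer, from which (i) is immediate (since $\delta(f)$ has no simple poles) and (ii) follows because adding $m\,\delta(b)/b$ shifts each simple-pole residue by the integer $m$; for (iii) it simply cites Lemma~6 of \cite{BCCLX2013}. Your approach unpacks these citations into explicit computations. The residue characterization is morally the same idea as your local analysis---your observation that $p_j^2\mid B$ forces the obstruction to vanish at $p_j$ is precisely the statement that only simple poles matter---but your route avoids naming residues and stays at the level of $\gcd$'s and valuations. The gain is a fully self-contained argument using only Proposition~\ref{PROP:val}; the cost is a little more bookkeeping than quoting the residue criterion would require.
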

\begin{proof} The first two assertions follows from Lemma 2 in~\cite{GeddesLeLi2004} saying that $f$ is differential-reduced if and only if none of its residues at simple poles is an integer. The third one is Lemma 6 in~\cite{BCCLX2013}.
\end{proof}

\begin{lemma}\label{LEM:expg}
Let $g\in \bC(x)$ and $f = \delta(g) = a/b$ with $a, b \in \bC[x]$ and $\gcd(a, b)=1$. Then  $t = P\cdot b^m \cdot \exp(g)$ with $m\in \bN$ and $P\in \bC[x]\setminus \{0\}$ is elementary integrable over $\bC(x)$ if and only if there exists $Q \in \bC[x]$ such that $t = \delta(Q\cdot b^{m+1}\cdot\exp(g))$ and
\[
\deg_x(Q) = \left\{
\begin{array}{ll}
\deg_x(P)- \deg_x(a), & \hbox{ if $\deg_x(a)\geq \deg_x(b)$;}\\
\deg_x(P)-\deg_x(b)+1, & \hbox{if $\deg_x(a) < \deg_x(b)-1$.}
\end{array}
  \right.
\]
\end{lemma}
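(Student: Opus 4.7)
The plan is to apply Theorem~\ref{THM:fexpg} to $t = Pb^m\exp(g)$. Since $\delta(g) = a/b$, the theorem says $t$ is elementary integrable over $\bC(x)$ if and only if there exists $h\in \bC(x)$ with $Pb^m = \delta(h) + h\cdot (a/b)$, which after clearing denominators becomes
\[ b\delta(h) + ah = Pb^{m+1}. \]
The ``if'' direction of the lemma is immediate: a direct differentiation of $Qb^{m+1}\exp(g)$ produces $t$ exactly when the polynomial identity $b\delta(Q)+Q(a+(m+1)\delta(b))=P$ holds. Hence it suffices to establish the ``only if'' direction by showing that any $h$ solving the displayed equation must take the form $h = Qb^{m+1}$ for some $Q \in \bC[x]$ of the stated degree.

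The first step is to show $h\in\bC[x]$. Because $f = \delta(g) = a/b$, Proposition~\ref{PROP:dred}(i) implies $a/b$ is differential-reduced, so Proposition~\ref{PROP:dred}(iii) applied to $b\delta(h)+ah\in\bC[x]$ forces $h\in\bC[x]$.

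The second step, and the main technical hurdle, is to peel off the factors of $b$ from $h$ one at a time. I claim by induction on $k$ from $0$ to $m+1$ that $h = b^k h_k$ for some $h_k\in\bC[x]$ satisfying
\[ b\delta(h_k) + h_k\bigl(a + k\,\delta(b)\bigr) = Pb^{m+1-k}. \]
The base case $k=0$ is the original equation. For the induction step, when $k\le m$ the right-hand side is divisible by $b$, so reducing modulo $b$ yields $h_k(a+k\delta(b)) \equiv 0 \pmod{b}$. By Proposition~\ref{PROP:dred}(ii), $(a + k\delta(b))/b$ remains differential-reduced, so in particular $\gcd(b,\, a + k\delta(b))=1$; thus $b\mid h_k$, and writing $h_k = bh_{k+1}$ and dividing the equation by $b$ gives the identity for $k+1$. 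After $m+1$ iterations I obtain $h = b^{m+1}Q$ with $Q=h_{m+1}\in\bC[x]$ satisfying $b\delta(Q) + Q(a + (m+1)\delta(b)) = P$, which is exactly the condition $t = \delta(Qb^{m+1}\exp(g))$.

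Finally, I read off $\deg_x(Q)$ from this identity by comparing leading terms. In Case~1, $\deg_x(a)\ge\deg_x(b)>\deg_x(\delta(b))$ forces the dominant term to come from $Qa$, yielding $\deg_x(Q) = \deg_x(P) - \deg_x(a)$. In Case~2, $\deg_x(a)<\deg_x(b)-1$ makes the two terms $b\delta(Q)$ and $(m+1)Q\,\delta(b)$ share the common degree $\deg_x(Q)+\deg_x(b)-1$, and the only subtle point is that their leading coefficients could a priori cancel. A direct computation shows the combined leading coefficient is proportional to $\deg_x(Q)+(m+1)\deg_x(b)$, which is strictly positive since $m\ge 0$ and $\deg_x(b)\ge 2$ in this case; therefore no cancellation occurs and $\deg_x(Q)=\deg_x(P)-\deg_x(b)+1$.
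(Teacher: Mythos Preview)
Your proof is correct and uses the same ingredients as the paper---Theorem~\ref{THM:fexpg} to reduce to a Risch equation, Proposition~\ref{PROP:dred} to control the shape of the solution, and the same leading-term comparison for the degree formula. The one genuine difference is in how you extract the factor $b^{m+1}$: the paper simply \emph{defines} $Q:=h/b^{m+1}\in\bC(x)$, obtains $P=b\delta(Q)+(a+(m+1)\delta(b))Q$, and then applies Proposition~\ref{PROP:dred}(iii) once to this equation (with the differential-reduced function $(a+(m+1)\delta(b))/b$) to conclude $Q\in\bC[x]$. You instead first show $h\in\bC[x]$ and then peel off the $b$-factors one at a time by an induction using $\gcd(b,a+k\delta(b))=1$. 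Both work; the paper's route is a bit shorter, while yours makes the polynomial divisibilities more explicit. A small remark: your claim that $\deg_x(b)\geq 2$ in Case~2 is not literally needed---what you really use is $\deg_x(Q)+(m+1)\deg_x(b)>0$, and the only way this fails is $Q,b$ both constant, which would force $P=0$.
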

\begin{proof}
Suppose that $t$ is elementary integrable over $\bC(x)$. Then Theorem~\ref{THM:fexpg} implies that there exists $h\in \bC(x)$ such that
\[P b^m = \delta(h) + f \cdot h.\]
Write $h = Q \cdot b^{m+1}$ with $Q\in \bC(x)$. Then
\[P = b\delta(Q) + (a+(m+1)\delta(b))Q.\]
We first show that $Q\in \bC[x]$.
Since $f = \delta(g) = a/b$, Proposition~\ref{PROP:dred}~$(i)$ and $(ii)$ implies that both $f$ and
$(a+(m+1)\delta(b))/b$ are differential-reduced. Then we have $Q\in \bC[x]$ by
Proposition~\ref{PROP:dred}~$(iii)$. We next estimate $\deg_x(Q)$.
 Since $f = \delta(g)$, we have either $\deg_x(a)\geq \deg_x(b)$ or $\deg_x(a) < \deg_x(b)-1$
by Theorem~4.4.4 in~\cite{BronsteinBook}, i.e., $\deg_x(a) \neq  \deg_x(b)-1$.
If $\deg_x(a)\geq \deg_x(b)$, then $\deg_x(P) = \deg_x(a) + \deg_x(Q)$. Hence
$\deg_x(Q) = \deg_x(P)- \deg_x(a)$. If $\deg_x(a) < \deg_x(b)-1$, then
the leading monomial of $b\delta(Q) + (a+(m+1)\delta(b))Q$ is $\text{lc}(Q)\cdot \text{lc}(b) \cdot(\deg_x(Q)+ (m+1)\deg_x(b))x^{\deg_x(b)+ \deg_x(Q)-1}$, which implies that $\deg_x(Q) = \deg_x(P)-\deg_x(b)+1$.
\end{proof}

We now present a stable version of Theorem~\ref{THM:fexpg}.

\begin{theorem}\label{THM:expstab}
Let $f, g \in \bC(x)$ be such that $g \notin \bC$ and $t = f\cdot \exp(g)$. Then $t$ is a stable
elementary function if and only if $f\in \bC[x]$ and $g = \lambda x +\mu$
for some $\lambda, \mu \in \bC$ with $\lambda\neq 0$.
\end{theorem}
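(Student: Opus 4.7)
\emph{Sufficiency} is straightforward: when $f\in\bC[x]$ and $g=\lambda x+\mu$ with $\lambda\ne 0$, the operator $D=\delta+\lambda$ is a $\bC$-linear bijection of $\bC[x]$ (trivial kernel; leading-term surjectivity on each $\bC[x]_{\le d}$), so one inductively solves $h_i=D(h_{i+1})$ from $h_0=f$ in $\bC[x]$, and $\delta^i(h_i\exp(g))=t$ for every $i$ witnesses stability. For necessity, the first move is to iterate Theorem~\ref{THM:fexpg}: since $g\notin\bC$, every elementary antiderivative of an expression $h\exp(g)$ with $h\in\bC(x)$ has the form $h'\exp(g)+c$ with $h'\in\bC(x)$ and $c\in\bC$, and the constants thereby produced integrate to polynomials that can be carried along separately. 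Stability of $t$ therefore produces a sequence $\{h_i\}_{i\ge 0}\subseteq\bC(x)$ with $h_0=f$ and
\[
h_i=\delta(h_{i+1})+h_{i+1}\,\delta(g)\qquad(i\ge 0).
\]
Three successive constraints will then be extracted from this recurrence: (1) $g\in\bC[x]$, (2) $\deg g=1$, and (3) $f\in\bC[x]$.

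\emph{Steps 1 and 2: the shape of $g$.} Write $\delta(g)=a/b$ in lowest terms and set $E=\deg b-\deg a=\nu_\infty(\delta(g))$; note $E\ne 1$ by Theorem~4.4.4 of~\cite{BronsteinBook} already invoked in the proof of Lemma~\ref{LEM:expg}. For any irreducible $p\mid b$ one has $\nu_p(\delta(g))\le -2$, and a comparison of orders in the recurrence at $p$ gives $\nu_p(h_{i+1})=\nu_p(h_i)+\nu_p(b)$, whence $\nu_p(h_i)=\nu_p(f)+i\nu_p(b)\to+\infty$; for $p\nmid b$ a parallel comparison shows that negative valuations at $p$ disappear after finitely many steps. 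Combining via $-\nu_\infty(u)=\sum_p\nu_p(u)\deg p$ and $\sum_{p\mid b}\nu_p(b)\deg p=\deg b$ yields, for $i$ large,
\[
-\nu_\infty(h_i)\;\ge\;C_0+i\deg b,
\]
for a constant $C_0$ depending only on $f$ and $b$. The same comparison at $\infty$ gives $\nu_\infty(h_i)=\nu_\infty(f)-i\min(1,E)$. Because $\deg b-\min(1,E)\ge 1$ in both admissible regimes ($E\le 0$ and $E\ge 2$) whenever $\deg b\ge 1$, the two estimates collide for large $i$, forcing $\deg b=0$ and hence $g\in\bC[x]$. For~(2), suppose $\deg g\ge 2$: then $a=\delta(g)$ has degree $\ge 1$, the pole analysis puts $h_i\in\bC[x]$ for $i\gg 0$, and then leading-term comparison in $h_i=\delta(h_{i+1})+a\,h_{i+1}$ gives $\deg h_{i+1}=\deg h_i-\deg a$; iterating forces $h_j=0$ for some $j$, whence $f=0$, a contradiction. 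So $g=\lambda x+\mu$ with $\lambda\ne 0$.

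\emph{Step 3: the shape of $f$.} With $\delta(g)=\lambda$, set $D=\delta+\lambda$ on $\bC(x)$. The recurrence becomes $h_i=D(h_{i+1})$, hence $f\in\bigcap_{i\ge 0}D^i(\bC(x))$. The partial-fraction decomposition
\[
\bC(x)=\bC[x]\;\oplus\;\bigoplus_{\alpha\in\bC}V_\alpha,\qquad V_\alpha=\bigoplus_{k\ge 1}\bC\cdot(x-\alpha)^{-k},
\]
is $D$-invariant. On $\bC[x]$ the map $D$ is bijective (as in the sufficiency). On each $V_\alpha$ the key observation is that $D$ strictly \emph{raises} the depth of support by one: for $v=\sum_{k=1}^N c_k(x-\alpha)^{-k}$ with $c_N\ne 0$, the coefficient of $(x-\alpha)^{-(N+1)}$ in $D(v)$ equals $-Nc_N\ne 0$. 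Consequently, any element of $V_\alpha$ whose support ends at depth $N$ lies in $D^i(V_\alpha)$ only when $i\le N-1$, and $\bigcap_i D^i(V_\alpha)=\{0\}$. Combining the two summands gives $\bigcap_i D^i(\bC(x))=\bC[x]$, and so $f\in\bC[x]$.

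The main obstacle is Step~1: one must marshal the pole data at finite places both dividing and not dividing $b$ together with the data at $\infty$, and then verify that the two resulting linear estimates on $\nu_\infty(h_i)$ really collide for every admissible $E$; the case-split $E\le 0$ versus $E\ge 2$, with $E=1$ excluded by Bronstein's theorem, is the technical heart. By contrast, Step~3 is conceptually short once the partial-fraction decomposition is in play, reducing to the single shift property of $D$ on each $V_\alpha$.
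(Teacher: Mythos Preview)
Your proof is correct, and it takes a genuinely different route from the paper's.

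The paper's argument leans heavily on Theorem~\ref{THM:stable} (the $\bC[x]$-module structure of the stable set): to force $\delta(g)$ constant it first multiplies $t$ by the denominator of $f$ to obtain $p\exp(g)$ with $p\in\bC[x]$, still stable, and then invokes Lemma~\ref{LEM:expg} (built on the differential-reduced machinery of Proposition~\ref{PROP:dred}) to show that each successive antiderivative has the form $Q_ib^{i}\exp(g)$ with $\deg Q_{i+1}<\deg Q_i$, a descent contradiction. To force $f\in\bC[x]$ it again multiplies by $Q/(x-\alpha)$ to isolate a single simple pole and reads off a valuation contradiction.

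You bypass Theorem~\ref{THM:stable} and Lemma~\ref{LEM:expg} entirely, working directly with the recurrence $h_i=\delta(h_{i+1})+h_{i+1}\delta(g)$ obtained by iterating Theorem~\ref{THM:fexpg}. Your Step~1 is the interesting departure: rather than reducing to polynomials first, you run a local analysis at every place (using that poles of $\delta(g)$ have order $\ge 2$) and combine the resulting growth rates through the product formula $-\nu_\infty(u)=\sum_p\nu_p(u)\deg p$ to squeeze out $\deg b=0$. This is more technical but fully self-contained. Step~2 then coincides in spirit with the paper's degree descent (and is essentially the polynomial case of Lemma~\ref{LEM:expg}). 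Your Step~3 via the $D$-invariant partial-fraction decomposition and the depth-raising property of $D=\delta+\lambda$ on each $V_\alpha$ is a clean alternative to the paper's single-pole trick; note incidentally that your own pole analysis already forces $h_i\in\bC[x]$ for \emph{all} $i$ (not just $i\gg 0$) once $b$ is constant, so Step~3 is in fact redundant with the argument you already gave.

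One expository caution: in Step~1, the exact equality $\nu_\infty(h_i)=\nu_\infty(f)-i\min(1,E)$ can fail at indices with $\nu_\infty(h_{i})=0$ when $E\ge 2$; but since your lower bound forces $\nu_\infty(h_i)<0$ for $i$ large, the equality holds eventually and the collision still goes through. It would be worth making this explicit.
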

\begin{proof}To see the necessity, note that the function
$\exp(\lambda x + \mu)$ is stable since for any $i\in \bN$,
\[\exp(\lambda x + \mu) = \delta^i\left(\frac{1}{\lambda^i}\exp(\lambda x + \mu)\right).\]
By Theorem~\ref{THM:stable}, $f\cdot \exp(\lambda x + \mu)$ is stable for any $f\in \bC[x]$. For the sufficiency, we assume that $t$ is a stable
elementary function. Since $g\notin \bC$, we have $\delta(g)\neq 0$. We first claim that $\delta(g)$ must be a constant.
Suppose that this is not true. Let $d$ be the denominator of $f$. Then $t_1 = d\cdot t = p \cdot \exp(g)$ for some $p\in \bC[x]$ and it is also stable by Theorem~\ref{THM:stable}.
Since $t_1$ is elementary integrable over $\bC(x)$, Theorem~\ref{THM:fexpg} implies that there exists $h\in \bC(x)$ such that
\[p = \delta(h) + \delta(g) \cdot h.\]
Write $\delta(g) = a/b$ with $a, b \in \bC[x]$ and $\gcd(a, b)=1$.
By Lemma~\ref{LEM:expg}, there exists  $Q_1\in \bC[x]$ such that $h = b Q_1$ and
\[
\deg_x(Q_1) = \left\{
\begin{array}{ll}
\deg_x(p)- \deg_x(a), & \hbox{ if $\deg_x(a)\geq \deg_x(b)$;}\\
\deg_x(p)-\deg_x(b)+1, & \hbox{if $\deg_x(a) < \deg_x(b)-1$.}
\end{array}
  \right.
\]
Notice that $\deg_x(Q_1)< \deg_x(p)$ since $\delta(g)$ is not a constant.
Since $t_1$ is stable and $t_1 = \delta(Q_1 b\exp(g))$, we have $t_2 = Q_1 b\exp(g)$ is also stable. Following the same argument as above, we have
 $t_2 = \delta(Q_2\cdot  b^2\exp(g))$ for some $Q_2 \in \bC[x]$ with $\deg_x(Q_2)< \deg_x(Q_1)$. Repeating this process, we obtain a sequence of polynomials
 $\{Q_i\}_{i\in \bN}$ with $Q_0 = p$ and $\deg_x(Q_{i+1})< \deg_x(Q_i)$. This is a contradiction. Thus, $\delta(g)$ is a nonzero constant and then $g = \lambda x +\mu$ for some $\lambda, \mu \in \bC$ with $\lambda\neq 0$. It remains to show that $f \in \bC[x]$. Suppose that $f\in \bC(x)$ not a polynomial. Then $f = P/Q$ for some $P, Q\in \bC[x]$ with $Q\notin \bC$ and $\gcd(P, Q) =1$. Since $Q$ is not a constant, $Q$ has at least one root in $\bC$, say $\alpha$. Let $R = Q/(x-\alpha) \in \bC[x]$. Since $f\exp(g)$ is stable, so is $R\cdot f \exp(g) = P\exp(g)/(x-\alpha)$ by Theorem~\ref{THM:stable}. Now again Theorem~\ref{THM:fexpg} implies that there exists $h\in \bC(x)$ such that
\[\frac{P}{x-\alpha} = \delta(h) + \lambda \cdot h.\]
By estimating the valuation at $x-\alpha$, we have $\nu_{x-\alpha}(P/(x-\alpha))=-1$ but $\nu_{x-\alpha}(\delta(h) + \lambda \cdot h)$ is either
non-negative or strictly less than $-1$, which leads to a contradiction.
\end{proof}

\section{Stable D-finite power series}\label{SECT:dfinite}
The notion of D-finite power series was first introduced by Stanley~\cite{Stanley1980} in 1980 and studied extensively in~\cite{Lipshitz1989, Zeilberger1990, EC2}. These series like algebraic numbers can be algorithmically manipulated via its defining linear differential equations~\cite{AbramovLeLi2005, Salvy2019}.  We will study
the stability problem on D-finite power series.

Let $k$ be a field of characteristic zero and let $k{[[}x{]]}$ be the ring of formal power series in $x$ over $k$ and let $\delta$ denote the derivation
with $\delta(x)=1$ and $\delta(a)=0$ for all $a\in k$.
The quotient field of $k{[[}x{]]}$ is called
the field of formal Laurent series, denoted by $k((x))$, which includes
$k(x)$ as its subfiled.  Let $R$ be a ring and $\sigma: R\rightarrow R$
be an isomorphism on $R$. We call the pair $(R, \sigma)$
a \emph{differential ring}. Over a difference ring, we have the polynomial ring $R\langle S \rangle$ in which the addition is defined coefficient-wise
in the indeterminate $S$ and the multiplication satisfies the rule:
\[S \cdot r= \sigma(r)\cdot S\quad  \text{for all $r\in R$}.\]
In particular, we call $k[x]\langle S \rangle$ the ring of linear recurrence operators with polynomial coefficients. For a sequence $a_n: \bN \rightarrow k$, we define an action of an operator $P = \sum_{i=0}^d p_i S^i \in k[x]\langle S \rangle$ on $a_n$ by $P(a_n) = \sum_{i=0}^d p_i(n) a_{n+i}$. For an operator $L = \sum_{i=0}^d \ell_i D^i\in k[x]\langle D \rangle$ with $\ell_d\neq 0$, we call $d$ the order of $L$, denoted by $\text{ord}(L)$ and $\max\{\deg_x(\ell_0), \ldots, \deg_x(\ell_d)\}$ the degree of $L$, denoted by $\deg(L)$. We define the order and the degree of an operator $P \in k[x]\langle S \rangle$ in a similar way.

\begin{definition}\label{DEF:dfinit}
A power series $f = \sum_{n\geq 0} a_n x^n \in k{[[}x{]]}$ is said to be \emph{D-finite} over $k(x)$ if there exists a non-zero operator $L \in k[x]\langle D \rangle$ such that $L(f)=0$. Such an operator $L$ is called an \emph{annihilator} for $f$.
A sequence $a_n: \bN \rightarrow K$
is said to be \emph{P-recursive}  if there exists a non-zero operator
$P\in \in k[x]\langle S \rangle$ such that $P(a_n)=0$. We also call such an operator $P$ an \emph{annihilator} for $a_n$.
\end{definition}

The following theorem summarizes some fundamental properties of
D-finite power series and P-recursive sequences.

\begin{theorem}\label{THM:dfinite}
Let $f = \sum_{n\geq 0} a_n x^n \in k{[[}x{]]}$. Then
\begin{itemize}
\item[$(i)$] $f$ is D-finite if and only if its coefficient
sequence $a_n$ is P-recursive;
\item[$(ii)$] if $a_n$ has an annihilator in $k[x]\langle S\rangle$
of order $r$ and degree $d$, then $f$ has an annihilator in $k[x]\langle D\rangle$ of order at most $d$ and degree at most $r+d$;
\item[$(iii)$] if both $a_n$ and $b_n$ are P-recursive with annihilators $A$ and $B$ in $k[x]\langle S\rangle$, then $a_n b_n$ is also P-recursive with
an annihilator of order at most $\text{ord}(A)\text{ord}(B)$ and degree
at most \[2\max\{\deg(A), \deg(B)\} \text{ord}(A)^2\text{ord}(B)^2.\]
\end{itemize}
\end{theorem}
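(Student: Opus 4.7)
The plan is to prove the three assertions in sequence, relying throughout on the standard duality between $k[x]\langle D\rangle$ acting on $f=\sum_{n\geq 0}a_nx^n$ and $k[x]\langle S\rangle$ acting on $(a_n)$. The key translation is that the Euler operator $\theta:=xD$ multiplies the coefficient of $x^n$ by $n$, so that $q(\theta)f=\sum_n q(n)a_nx^n$ for any polynomial $q$; dually, the shift $S$ on $(a_n)$ corresponds, up to a polynomial initial-condition correction, to division by $x$ on the series side.

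For the direction $f$ D-finite $\Rightarrow$ $(a_n)$ P-recursive in (i), I would start from $L(f)=0$ with $L=\sum_{i,j}\ell_{ij}x^jD^i$ and extract the coefficient of $x^n$: using $D^i(x^m)=m\fallingfac{i}x^{m-i}$ and reindexing via $m=n+i-j$, one obtains
\[
\sum_{i,j}\ell_{ij}(n+i-j)\fallingfac{i}\,a_{n+i-j}=0,
\]
valid for $n$ sufficiently large, which after a uniform shift in $n$ becomes a non-zero element of $k[x]\langle S\rangle$ annihilating $(a_n)$. For the quantitative converse (ii), I would apply the correspondence in reverse: given $P=\sum_{k=0}^r p_k(x)S^k$ with $\deg p_k\leq d$ and $P(a_n)=0$, use the identity
\[
\sum_{n\geq 0}p_k(n)\,a_{n+k}\,x^n=x^{-k}\,p_k(\theta-k)\Bigl(f-\sum_{m<k}a_mx^m\Bigr),
\]
sum over $k=0,\ldots,r$, and multiply through by $x^r$ to obtain $L(f)=Q(x)$, where $L:=\sum_{k=0}^r x^{r-k}p_k(\theta-k)$ and $Q\in k[x]$ has degree $\leq r-1$. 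Expanding $\theta^j=(xD)^j$ as a sum of $x^iD^i$ with $i\leq j$ via Stirling-type identities shows that $L$ has order $\leq d$ and coefficients of $x$-degree $\leq r+d$. Composing $L$ on the left with an operator annihilating the polynomial residue $Q$ then yields a homogeneous annihilator of $f$, and the claimed bounds follow from a careful accounting of how the order and degree behave under this composition.

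For (iii), the strategy is a finite-dimensionality argument. Since $A$ re-expresses $S^{r_A}a_n$ as a $k(x)$-linear combination of the lower shifts $\{S^ia_n:0\leq i<r_A\}$, the $k(x)$-linear span $U_A$ of these shifts has dimension $\leq r_A:=\text{ord}(A)$ and is $S$-stable over $k(x)$; similarly for $U_B$ of dimension $\leq r_B:=\text{ord}(B)$. The span $V$ of the Hadamard products $\{S^ia_n\cdot S^jb_n:0\leq i<r_A,\,0\leq j<r_B\}$ therefore has $k(x)$-dimension $\leq r_Ar_B$, is $S$-stable, and contains every shift $S^k(a_nb_n)$. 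A non-trivial linear dependence among the first $r_Ar_B+1$ shifts of $a_nb_n$ in this finite-dimensional space yields an annihilator of order at most $r_Ar_B$.

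The main obstacle I expect is establishing the explicit degree bound $2\max\{\deg A,\deg B\}\cdot\text{ord}(A)^2\text{ord}(B)^2$ in (iii). To get it, one writes the action of $S$ on a chosen basis of $V$ as an $(r_Ar_B)\times(r_Ar_B)$ matrix with entries in $k(x)$, whose numerator and denominator degrees are bounded by $\deg A$ and $\deg B$ respectively, and then extracts an annihilating polynomial of $S$ via Cramer's rule on a linear system of this size. The degrees compound quadratically through the determinantal bookkeeping, while the factor of $2$ arises from handling numerators and denominators simultaneously when clearing denominators to return from $k(x)\langle S\rangle$ to $k[x]\langle S\rangle$.
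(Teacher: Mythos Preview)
The paper does not supply its own proof of this theorem; it simply cites \cite{KPbook} (p.\ 149) and \cite{Kauers2014} (Theorem~8). Your sketch is precisely the standard argument one finds in those references: the Euler-operator/coefficient-extraction correspondence for (i)--(ii), and for (iii) the finite-dimensional $S$-stable $k(x)$-span of the products $S^ia_n\cdot S^jb_n$ together with Cramer-type bookkeeping for the degree bound. In that sense there is nothing to contrast; you are reconstructing what the cited sources do.

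There is, however, a genuine gap in your treatment of (ii). Your operator $L=\sum_{k}x^{r-k}p_k(\theta-k)$ does have $D$-order at most $d$ and $x$-degree at most $r+d$, but it only yields $L(f)=Q$ with $Q\in k[x]$ of degree $\leq r-1$, not $L(f)=0$. Your proposed fix---``composing $L$ on the left with an operator annihilating the polynomial residue $Q$''---unavoidably raises the order: the cheapest such left factor is $D^{\deg Q+1}$, which can add up to $r$ to the order of the composite. The phrase ``the claimed bounds follow from a careful accounting'' papers over exactly this point, and no bookkeeping along your route recovers an order bound of $d$. Concretely, take $a_n\equiv 1$, so that $P=S-1$ has $r=1$ and $d=0$; your construction gives $L=1-x$ and $L(f)=1$, while $f=1/(1-x)$ admits no nonzero annihilator of order $0$ in $k[x]\langle D\rangle$. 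So the composition step is not the innocuous accounting you present it as: either the bound in (ii) must be read differently than you (and the paper's wording) suggest, or a different mechanism than left-composition is required, and your argument as written does not close the gap.
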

\begin{proof}
For the proofs, see~\cite[p.\ 149]{KPbook} and~\cite[Theorem 8]{Kauers2014}.
\end{proof}
As a high-order generalization of Gosper's algorithm~\cite{Gosper1978}
and its differential analogue~\cite{Almkvist1990}, Abramov and van Hoeij studied the integration problem on solutions of linear functional
equations~\cite{AvH1997, AvH1999}. In the following, we let $k =\bC$ and
$\delta = d/dx$ so that we can talk about generalized series solutions of linear differential equations.
\begin{problem}\label{PROB:dfiniteint}
Given an operator $L \in k(x)\langle D \rangle$, find a minimal-order operator $\tilde{L} \in  k(x)\langle D \rangle$  such that
the derivatives of the solutions of $\tilde{L}$ are the solutions of $L$.
We call $\tilde{L}$ an \emph{integral} of $L$, denoted by $\text{int}(L)$.
\end{problem}
Note that $\tilde L$ is unique up to a factor in $K$ and
the order of $\tilde L$ is greater than that of $L$ by at most one.
If $\text{ord}(\tilde{L})= \text{ord}(L)$,  Abramov and van Hoeij
proved that there exist $P \in k(x)\langle D \rangle$ with $\text{ord}(P)<\text{ord}(L)$ and $r\in k(x)$ such that
$D\cdot P + r\cdot L = 1$. In this case,
a solution $f$ of $L$ has an anti-dervative of the form $P(f)$.
For a power series $f = \sum_{n\geq 0} a_nx^n \in k{[[}x{]]}$, we call the series $\sum_{n\geq 1} \frac{a_{n-1}}{n} x^n$ a \emph{formal integral} of $f$, denoted by $\text{int}(f)$.

\begin{definition}\label{DEF:dstable}
 A D-finite power series $f$ is said to be \emph{stable} if
there exists a sequence $\{g_i\}_{i\in \bN}$ in $k{[[}x{]]}$ such that $g_0 =f$, $g_i = \delta(g_{i+1})$ and  all of the $g_i$'s have annihilators of the same order.  It is said to be \emph{eventually stable} if there exists $m\in \bN$ such that $\text{int}^m{(f)}$ is stable.
An operator $L \in k(x)\langle D \rangle$ is said to be~\emph{stable}
if $\text{ord}(\text{int}^i{(L)})= \text{ord}(L)$ for all $i \in \bN$
and be \emph{eventually stable} if there exists $m\in \bN$ such that $\text{int}^m{(L)}$ is stable.
\end{definition}

The following result was first discovered by Guo and then proved by the author in 2020. For more interesting stable power series, see Guo's thesis~\cite{Guo2020}.
\begin{theorem}\label{THM:dstable}
Any D-finite power series is eventually stable.
\end{theorem}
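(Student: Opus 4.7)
The plan is to work at the level of linear differential operators in $k(x)\langle D \rangle$ rather than directly with the power series. Let $L$ be a minimal annihilator of $f$. I would first show that eventual stability of the power series $f$ follows from eventual stability of the operator $L$ in the sense of Definition~\ref{DEF:dstable}: if $\text{int}^m(L)$ is a stable operator, then for each $i \geq 0$ the operator $\text{int}^{m+i}(L)$ annihilates the power series $\text{int}^{m+i}(f)$, and these annihilators all have the common order $\text{ord}(\text{int}^m(L))$, so $\text{int}^m(f)$ is stable as a power series.

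Next I would use the fact, recorded just after Problem~\ref{PROB:dfiniteint}, that the integer sequence $\{\text{ord}(\text{int}^i(L))\}_{i \in \bN}$ is non-decreasing and grows by at most $1$ at each step. Consequently, proving eventual stability of $L$ is equivalent to proving that this sequence is bounded above, for then it is automatically eventually constant.

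To establish the boundedness, the key tool is the Abramov--van Hoeij characterization recalled in the paper: one has $\text{ord}(\text{int}(L')) = \text{ord}(L')$ if and only if there exist $P \in k(x)\langle D \rangle$ with $\text{ord}(P) < \text{ord}(L')$ and $r \in k(x)$ satisfying $D \cdot P + r \cdot L' = 1$. Equivalently, the class of $1$ in the cyclic module $M_{L'} := k(x)\langle D \rangle / k(x)\langle D \rangle \cdot L'$ lies in the image of multiplication by $D$. The task therefore becomes: show that for all sufficiently large $i$, the class of $1$ is integrable in $M_{\text{int}^i(L)}$.

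The main obstacle will be this last step, and my approach would combine two ingredients. First, I would use Theorem~\ref{THM:dfinite}(ii) and the explicit recurrence satisfied by the coefficient sequence of $\text{int}^i(f)$ to control the possible growth of the D-finite order of annihilators. Second, I would analyze structurally how iterated integration enriches the solution space of $\text{int}^i(L)$ with polynomial antiderivatives of existing solutions: once sufficiently many polynomial layers have accumulated in the cyclic module, I expect one can exhibit an explicit operator $P$ witnessing the Abramov--van Hoeij identity $D \cdot P + r \cdot \text{int}^i(L) = 1$. Verifying that this structural enrichment truly forces eventual integrability of the class of $1$ is what I anticipate to be the deepest part of the proof.
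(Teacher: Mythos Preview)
Your reduction to bounding the sequence $\{\text{ord}(\text{int}^i(L))\}_{i\in\bN}$ is correct and matches the paper's target, but your proposal does not actually establish the bound: the Abramov--van Hoeij criterion and the ``polynomial layers'' heuristic give only a reformulation of the problem, and you yourself flag the verification that $1$ is eventually integrable in the cyclic module as the unproven ``deepest part.'' That is a genuine gap, and nothing in your outline makes clear why the accumulation of polynomial solutions would ever force the identity $D\cdot P + r\cdot \text{int}^i(L)=1$ to become solvable.

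The paper bypasses the operator-theoretic route entirely and works directly with coefficient sequences, which makes the bound almost immediate. Writing $f=\sum_{n\ge 0} a_n x^n$, the $i$-th formal integral has coefficients
\[
b_{n,i}=\frac{a_{n-i}}{n(n-1)\cdots(n-i+1)}.
\]
The shifted sequence $a_{n-i}$ satisfies the same recurrence as $a_n$, and the factor $1/\bigl(n(n-1)\cdots(n-i+1)\bigr)$ is annihilated by $(n+1)S-(n-i+1)$, whose order and degree are both $1$ regardless of $i$. Theorem~\ref{THM:dfinite}(iii) then bounds the degree of an annihilator of the product $b_{n,i}$ \emph{uniformly in $i$}, and Theorem~\ref{THM:dfinite}(ii) converts this into a uniform bound on the order of a differential annihilator of $\text{int}^i(f)$. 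The ingredient you are missing is precisely this appeal to the closure bound for products of P-recursive sequences; your ``first ingredient'' gestures toward Theorem~\ref{THM:dfinite}(ii) but never invokes part~(iii), which is what actually does the work.
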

\begin{proof}
Let $f = \sum_{n\geq 0} a_n x^n \in k{[[}x{]]}$ be D-finite and $L\in k[x]\langle D \rangle$ be the minimal annihilator for $f$. By Theorem~\ref{THM:dfinite}~$(i)$, $a_n$ is P-recursive and so has a minimal annihilator $P\in k(x)\langle S\rangle$.
Let
\[g_i = \sum_{n\geq i} \frac{a_{n-i}}{n(n-1)\cdots (n-i+1)} x^n.  \]
Then $f = \delta^i(g_i)$. We claim that the order of minimal annihilators for $g_i$ is bounded. By Theorem~\ref{THM:dfinite}~$(ii)$, it suffices to show that the degrees of the minimal annihilators of the coefficient sequences $b_{n, i} = a_{n-i}/(n(n-1)\ldots (n-i+1))$  are bounded. Note that $a_{n-i}$ has the same annihilator as
$a_n$ and the minimal annihilators of the sequences $t_i = 1/(n(n-1)\ldots (n-i+1))$ are $(n+1)S -(n-i+1)$, whose order and degree are independent of $i$. Then Theorem~\ref{THM:dfinite}~$(iii)$ implies that the degrees of the minimal annihilators for $b_{n ,i}$ are at most $2\max\{1, \deg(P)\}\text{ord}(P)^2$. Then there exists $m\in \bN$ such that the formal integrals $\text{int}^m(f)$ is stable.
 \end{proof}

\section{Conclusion and future work}\label{SECT:conc}
This paper presents some
initial results towards a deep connection between dynamics and differential
algebra with the focus on stability problems in symbolic integration.
This is just a first try and more general cases are waiting for further
studying in this direction.

To conclude this paper, we propose some problems for future work.
The first problem is characterizing all possible algebraic functions that are stable in the differential field $(\overline{\bC(x)}, d/dx)$. The typical stable family of algebraic functions is $(x-c)^r$ with $c\in \bC$ and
$r \in \bQ \setminus\{-1, -2, \ldots\}$. We conjecture that an algebraic function is stable in $(\overline{\bC(x)}, d/dx)$ if and only if it is of the form
\[\sum_{i=1}^n p_i \cdot (x-c_i)^{r_i},\]
where $p_i \in \bC[x]$, $c_i \in \bC$ and $r_i \in \bQ \setminus\{-1, -2, \ldots\}$. The second problem is formulating a stable version of Liouville's theorem that describes the structure of elementary functions that are elementary integrable. This will be crucial for developing a Risch-type algorithm for detecting whether an elementary function is stable or not.
The third problem is studying stability problems in symbolic summation. For this moment, we have some parallel results in this direction which will be included in a forthcoming paper. A special case of this problem is to characterize all possible stable hypergeometric terms with respect to the
difference operator by thinking Gosper's algorithm dynamically. The last problem is related to a classical open problem  in differential algebra
asked by Rubel in~\cite{Rubel1983}. For a D-finite power series
$f = \sum_{n\geq 0} a_n x^n\in k[[x]]$, Rubel conjectured that
the set $\{n\in \bN\mid a_n = 0 \}$ is a Skolem set, i.e., a union of finitely many arithmetic progressions. A given elementary function $f(x)$ over $\bC(x)$ is generically not stable, but we conjectured that the set $\{i\in \bN \mid \text{$x^if(x)$ is elementary integrable over $\bC(x)$} \}$ is also a Skolem set. For instance, $\exp(x^2)$ is not elementary integrable, but for odd $i\in \bN$, $x^i \exp(x^2)$ is elementary integrable.

\medskip
\noindent {\bf Acknowledgement.}
The author thanks Ruyong Feng and Umberto Zannier for many discussions during the formation of
the conjecture for stable algebraic functions and  Ze-wang Guo for sharing his discoveries and many interesting examples.  I am also very grateful to  Hao Du and all of my PhD students (Lixin Du, Pingchuan Ma, and Chaochao Zhu) for reading the draft and sending me their constructive comments.

\bibliographystyle{plain}

\end{document}